\newtheorem{theorem}{Theorem}
\newtheorem{proposition}{Proposition}
\newtheorem{lemma}{Lemma}
\newcommand{%
    \import{./figs/}{.pdf_tex}
}[1]{%
    \import{./figs/}{#1.pdf_tex}
}
\newcounter{counterexample}
\newcommand{\increase}{
    \stepcounter{counterexample}
}
\newcommand{\comentario}[1]{
    \refstepcounter{counterexample}
    \label{#1}
}
\newcommand{\resetexample}{
    \setcounter{counterexample}{0}
}
\DeclareMathOperator{\Fidelity}{F}
\DeclareMathOperator{\Variance}{V}
\DeclareMathOperator{\gl}{GL}
\DeclareMathOperator{\range}{Ran}
\DeclareMathOperator{\Span}{span}
\DeclareMathOperator{\diag}{diag}
\DeclareMathOperator{\Aut}{Aut}
\newcommand{\projector}{\ensuremath{\mathbb{P}}}
\newcommand{\pairgroup}{\ensuremath{\mathbb{G}}}
\newcommand{\pairirrep}{\ensuremath{\sigma}}
\newcommand{\myi}{\ensuremath{\mathrm{i}}}
\newcommand{\kket}[1]{\vert #1 \rangle\!\rangle}
\newcommand{\bbra}[1]{\langle\!\langle #1\vert}
\NewDocumentCommand{\repX}{g}{
  \IfNoValueTF{#1}
    {\sigma_X}
    {\sigma_X(#1)}
}
\NewDocumentCommand{\repC}{g}{
  \IfNoValueTF{#1}
    {\sigma_{C_9^{\times 2}}}
    {\sigma_{C_9^{\times 2}}(#1)}
}
\begin{document}

\title{Benchmarking of universal qutrit gates}

\author{David Amaro-Alcal\'a}
\email{david.amaroalcala@ucalgary.ca}
\affiliation{Institute for Quantum Science and Technology, University of Calgary,
 Alberta T2N~1N4, Canada}
 \affiliation{Department of Physics, Lakehead University, Thunder Bay, ON, P7B 5E1}
\author{Barry C. Sanders}
\email{sandersb@ucalgary.ca}
\affiliation{Institute for Quantum Science and Technology, University of Calgary,
 Alberta T2N~1N4, Canada}
\author{Hubert de~Guise}%
\email{hubert.deguise@lakeheadu.ca}
 \affiliation{Department of Physics, Lakehead University, Thunder Bay, ON, P7B 5E1}
\affiliation{Institute for Quantum Science and Technology, University of Calgary,
 Alberta T2N~1N4, Canada}

\date{\today}

\begin{abstract}
We introduce a characterisation scheme for a universal qutrit gate set.
%We specify criteria to identify appropriate groups for benchmarking qutrit T~gates. 
Motivated by the rising interest in qutrit systems,
we apply our criteria to establish that our hyperdihedral group underpins a scheme to characterise the performance of a qutrit T~gate.
Our resulting qutrit scheme is feasible, as it requires resources and data analysis techniques similar to resources employed for qutrit Clifford randomised benchmarking.
Combining our T~gate benchmarking procedure for qutrits with known qutrit
Clifford-gate benchmarking enables complete characterisation of a universal
qutrit gate set.
\end{abstract}

\maketitle

\section{Introduction}\label{sec:introduction}

Driven by the desire to exploit every precious dimension of Hilbert space that
Nature provides~\cite{Wang2020},
the study and development of \(d\)-level systems (qudits) as extensions to qubits are rapidly developing.
Traditional quantum information processing primarily centers on encoding,
manipulating, and reading qubits~\cite{MikeAndIke}.
Qudit experiments are now done
using photons~\cite{Imany2019,Lanyon_Weinhold_Langford_OBrien_Resch_Gilchrist_White_2008}, trapped ions~\cite{randall2015,leupold18,Klimov_Guzman_Retamal_Saavedra_2003},
superconducting qutrits~\cite{Roy2022,Kononenko2021, Morvan2021}, 
dopants in silicon~\cite{fernandez2022coherent}, ultracold atoms~\cite{Lindon2022}, 
and spin systems~\cite{Fu2022}.
\increase

\comentario{reb:clarification-category-mistake}
For reliable qutrit technology, gate characterization, akin to qubit gates, is crucial~\cite{MagesanEaswar2012Emoq}.
An accepted standard of gate characterisation is randomised benchmarking.
\Ac{rb} schemes, in general, 
are used by experimentalists to
estimate the mean 
average gate fidelity over a \ac{gs}~\cite{MagesanEaswar2012Emoq}.
To date,
an explicit extension of \acl{rb} has only been reported for the Clifford
\ac{gs}~\cite{Jafarzadeh_Wu_Sanders_Sanders_2020}.
%but needs to be established for a universal qutrit gate set~\cite{brylinski2002universal,ralph2007,brennen2005}.
Here, we extend the \acl{rb} scheme for a universal qutrit \ac{gs}.
\increase

Qudit applications include
quantum teleportation~\cite{Luo_Zhong_Erhard_Wang_Peng_Krenn_Jiang_Li_Liu_Lu_2019, Hu_Zhang_Liu_Cai_Ye_Guo_Xing_Huang_Huang_Li_2020}, quantum memories~\cite{Vashukevich_Bashmakova_Golubeva_Golubev_2022,Otten_Kapoor_Ozguler_Holland_Kowalkowski_Alexeev_Lyon_2021}, Bell-state
measurements~\cite{Zhang_Zhang_Hu_Liu_Huang_Li_Guo_2019}, spin
chains~\cite{Senko_Richerme_Smith_Lee_Cohen_Retzker_Monroe_2015,Blok2021,Hu_Zhang_Liu_Cai_Ye_Guo_Xing_Huang_Huang_Li_2020,Imany2019,Zhang_Zhang_Hu_Liu_Huang_Li_Guo_2019,Lanyon_Weinhold_Langford_OBrien_Resch_Gilchrist_White_2008},
and quantum computing~\cite{Chi2022}.
Qutrits offer advantages over qubits such as:
superior security for quantum
communication~\cite{Bechmann-Pasquinucci_Peres_2000} or avoiding Hilbert-space truncation of
a higher-dimensional system~\cite{Wood2018}.
\increase

For quantum computing, a universal gate set is essential to efficiently approximate any gate~\cite{MikeAndIke}.
Adding a specific gate to the Clifford gate set achieves universality~\cite{Delfosse_2017,Howard2014}.
Such gate is the so-called T gate, which is a non-Clifford member of the third
level of the Clifford hierarchy.
\comentario{reb:contextuality}Interestingly,
  contextuality presents another avenue for universal quantum
computation~\cite{Howard2014,Pavicic2023quantum}.
\increase

\comentario{reb:three-is}Our work is of interest to experimental groups working
with a qutrit set of gates.
We introduce a RB scheme to characterise a universal set of qutrit gates thereby
helping determine the scalability~\cite{Knill2008} of a qutrit platform.
Furthermore, quantum information theorists will find interest in our relaxation
of the
unitary 2-design condition in a qutrit RB scheme~\cite{Morvan2021}.
\increase

\comentario{reb:case-against-novelty}
Our approach extends beyond the qubit case's geometrical considerations,
as detailed in prior studies \cite{dugas2015, barends2014}.
In this context,
we articulate the prerequisites for an optimal generalisation of dihedral benchmarking.
Our findings facilitate the qudit generalization of the dihedral scheme by
identifying and broadening the essential features needed by a gate set to
characterize T gates effectively.

Whereas methods for characterising arbitrary
gate sets are available \cite{chen2022,helsen22prx},
our work is particularly significant for two reasons.
Firstly, we introduce the construction of a gate set that demands minimal resources,
specifically necessitating only X and T gates.
Secondly, we establish criteria that enable the identification or construction of a gate set capable of characterising a T gate.
\increase

Qutrit experiments are proliferating~\cite{Imany2019,Lanyon_Weinhold_Langford_OBrien_Resch_Gilchrist_White_2008,randall2015,leupold18,Klimov_Guzman_Retamal_Saavedra_2003,Roy2022,Kononenko2021,Morvan2021,fernandez2022coherent,Lindon2022,Fu2022},
and our extension to experimentally feasible \acl{rb} schemes for characterising qutrit T gates
ushers in full characterisation of universal qutrit gates.
Furthermore, our method sets the stage for extending experimental
characterisation of a universal set of gates to qutrit cases.
With respect to \acl{rb} theory, our results offer a complete characterisation of the generators of a universal qutrit \ac{gs}.
\increase

Our work is preceded by the qubit case, 
wherein characterisation of a T gate is done via \acl{db}~\cite{barends2014,dugas2015}.
Dihedral benchmarking
twirls (i.e., averages over the uniform measure of a group)~\cite{bennet96,MagesanEaswar2012Emoq} over a representation
of the dihedral group $D_8$~\cite{Tinkham92}.
Here, we generalise the \ac{db} scheme to qutrits.
\comentario{reb:number-primitive-gates}Our scheme is optimal with respect to the
number of primitive gates required (X,
T, and H);
\comentario{reb:notation-gates}we use upright letters for gates and slanted font
for the corresponding
matrices.
\increase

\comentario{reb:new-sentence}
We now start the discussion of our extension of \acl{db} to qutrit systems.
We justify our focus on qutrits as currently there is
  an increasingly number of qutrit
  implementations~\cite{Imany2019,Lanyon_Weinhold_Langford_OBrien_Resch_Gilchrist_White_2008,randall2015,leupold18,Klimov_Guzman_Retamal_Saavedra_2003,Roy2022,Kononenko2021,
Morvan2021,fernandez2022coherent,Lindon2022,Fu2022}.
Furthermore, in the qutrit case, we know that our generalisation of $D_8$ (and
the corresponding \ac{irrep}) is the unique pair 
leading to an optimal generalisation of dihedral benchmarking.
Before establishing our qutrit generalisation of the dihedral group, we first recall several key mathematical concepts in \acl{rb} schemes for qutrits.
\increase

\section{Background}\label{sec:background}
\resetexample
We introduce part of the key algebraic entities needed in our work.
We work in the three-dimensional Hilbert space \(\mathscr{H} \coloneqq
\mathrm{span}(\ket{0},\ket{1}, \ket{2})\).
A state \(\rho\) 
is a positive trace-class operator with trace 1~\cite{moretti2017}.
The set of states in \(\mathscr{H}\) is denoted \(\mathscr{D}\);
pure states are the extreme points of \(\mathscr{D}\) and are of the form
\begin{equation}
\rho_\psi \coloneqq  \dyad{\psi}, \quad \ket{\psi}\in \mathscr{H}.
\end{equation}
Then for a mapping \(\mathcal{E}\colon \mathscr{D}\to \mathscr{D}\) and 
\(\ket{\Omega} \coloneqq  \nicefrac{1}{\sqrt{3}}(\ket{00}+\ket{11}+\ket{22})\),
the Choi-Jamio{\l}kowski operator is $\mathscr{J}_{\mathcal{E}}
\coloneqq 
3(\mathcal{E}\otimes \mathbb{I})\left(\dyad{\Omega}\right)$.\\
\increase

\comentario{reb:clarification-reps-groups}
We now describe the representation of the algebraic objects introduced in the
previous paragraph.
We denote by  \(\Lambda_{\mathcal{E}}\) the  matrix representation (in the
computational basis of \(\mathscr{H}^{\otimes 2}\)) of \(\mathscr{J}_{\mathcal{E}}\).
States~\(\{\rho\}\) are represented by a nine-dimensional vector \(\kket{\rho}\)
satisfying \(\kket{\mathcal{E}(\rho)} = \Lambda_{\mathcal{E}}
\kket{\rho}\); \(\kket{\rho}\) 
is computed by stacking the rows of the matrix representation (in the computational
basis) of \(\rho\)~\cite{choi1975,jamiolkowski1972}.
We emphasise gates are physical objects; therefore, it is incorrect to
discuss their representation.
\increase

The qutrit T gate has an important role within quantum computing.
A T~gate correspond to the action of some
unitary matrix $T \in \mathcal{C}_3\setminus\mathcal{C}_2$~\cite{Wang2020,Cui_Wang_2015,Kitaev_1997},
with $\mathcal{C}_l$ the $l$th level of the qutrit Clifford hierarchy.
For convenience, we only consider diagonal $T$ matrices.
Let \(\omega_d \coloneqq  \exp(2\pi \myi/d)\).
For qutrits,
the generalised Hadamard and  a~T matrix are~\cite{Watson_Campbell_Anwar_Browne_2015}
\begin{align}
H &\coloneqq 
%3^{-1/2}
\nicefrac{1}{\sqrt{3}}
\begin{bmatrix}
    1 & 1 & 1 \\
    1 & \omega_{3} & \omega_{3}^2 \\
    1 & \omega_{3}^2 & \omega_{3}
\end{bmatrix},\,
T \coloneqq \mqty[\dmat{1,\omega_{9}^{8},\omega_{9}}],
\label{eq:Tgate}
\end{align}
respectively.
The gates generated by~H and~T,
denoted by the generating set
$\langle \mathrm{H},\mathrm{T}\rangle$,
is a universal
\ac{gs} \cite{Cui_Wang_2015, Kitaev_1997, brylinski2002universal}.
The corresponding set of matrices is denoted by $\langle H, T\rangle$.
\increase

The qutrit Pauli group is defined in terms of the \ac{hw} matrices,
themselves one natural unitary generalisation of the Pauli matrices~\cite{Patera_Zassenhaus_1988}.
The qutrit \ac{hw} matrices are 
powers of the clock and shift matrices~\cite{sylvester,schwinger}:
\begin{equation}\label{eq:def-x-z}
Z \ket{i}\coloneqq\omega_{3}^i \ket{i},
X \ket{i} \coloneqq \ket{i\oplus 1},
i\in [3] \coloneqq
\{0,1,2\},
\end{equation} 
with
\(\oplus\) denoting addition modulo 3 and \([k] \coloneqq  \{0,\ldots, k-1\}\).
In turn, the \ac{hw} matrices
$\{W_k, k\in[9]\}$ correspond to
\(W_{3i+j} \coloneqq X^i Z^j, \text{for }  i,j\in[3]\):
\(\mathcal{W} \coloneqq  \{W_{3i+j}\colon i,j\in [3]\}\).
Then the qutrit Pauli group 
is
\(\mathcal{P} \coloneqq \langle \mathcal{W}, \omega_3 \mathbb{I}\rangle\).
\increase

Several concepts from representation theory are used in our work~\cite{serre1977}.
Given a finite group \(\mathbb{G}\) and a vector space \(\mathscr{V}\), a representation 
\(\sigma\) is a homomorphic mapping
from \(\mathbb{G}\) to \(\gl(\mathscr{V})\); 
henceforth \(\mathscr{V}\) refer to either \(\mathscr{H}\) or \(\mathscr{H}^{\otimes 2}\).
For concreteness, we employ the canonical isomorphisms
\(\gl(\mathscr{H}) \cong \mathscr{M}\) and \(\gl(\mathscr{H}^{\otimes 2}) \cong \mathscr{M}^{\otimes 2}\),
to ensure our representations are matrices.
The range (or image) of \(\sigma\) is denoted \(\range(\sigma) \coloneqq
\{\sigma(g)\colon g\in \mathbb{G}\}\).
\increase

The term irreducible representation can refer to a subspace and to a mapping.
Given a non-trivial subspace \(\Sigma\subseteq \mathscr{V}\) invariant 
under the action of \(\sigma\),
we decompose 
$\mathscr{V} = \Sigma \oplus
\Sigma^{\perp}$,\ where \({}^{\perp}\) denotes orthogonal complement.
In general, 
if \(\sigma\) has an ordered multiset of non-trivial invariant subspaces
\(\{\Sigma_i\}\), 
\(\mathscr{V}\) can be decomposed as
\begin{equation}
\mathscr{V} = \bigoplus_i \Sigma_i.
\end{equation}
The subspaces \{\(\Sigma_i\)\} are known as irreps, mostly in the context of the
decomposition of a representation in irreps~\cite{serre1977}.
Unless specified,
capital Greek letters represent irreps as subspaces,
whereas lowercase Greek letters indicate their homomorphic mappings.
\increase

We now introduce the representation of a group computed from the Choi matrix.
Let \(\mathbb{G}\) be a finite group with a unitary representation \(\sigma\colon
\mathbb{G}\to \mathscr{M}\).
We define the representation \(\Lambda_\sigma\colon \mathbb{G}\to
\mathscr{M}^{\otimes 2}\) that maps \(g\in \mathbb{G}\) to 
\(\Lambda_\sigma(g) \coloneqq \sigma(g)\otimes \sigma(g)^*\), where \({}^*\)
denotes complex conjugation.
We refer to \(\Lambda_{\sigma}(g)\) as a gate.
We sometimes shorten \(\Lambda_{\sigma}(g)\) by \(\Lambda_g\) when the knowledge
of \(\sigma\) is implicit or unnecessary; we follow the convention of using a Greek
subindex to denote the representation and a Latin subindex an element of such
representation. 
\increase

We recall the definition of the twirl by a representation of a group.
Let \(\mathbb{G}\) be a finite group with a  three-dimensional representation
\(\sigma\).
The twirl of a channel
$\mathcal{E}$
over a group \(\mathbb{G}\) is 
\begin{equation}\label{eq:twirl_def}
\mathcal{T}_{\mathcal{E}}^{(\mathbb{G}, \sigma)}
\coloneqq 
\underset{g\in \mathbb{G}}{\mathbb{E}}
\Lambda_g^{\dagger}
\Lambda_{\mathcal{E}}
\Lambda_g,
\end{equation}
where 
\(\mathbb{E}_{x\in \mathbb{X}}\) denotes average over
the uniform measure on \(\mathbb{X}\); that is, \(x\) has probability \(1/|\mathbb{X}|\).
We generally omit the pair group-irrep \((\mathbb{G}, \sigma)\) 
in writing the left-hand side of Eq.~\eqref{eq:twirl_def};
the trace of \(\mathcal{T}_{\mathcal{E}}\) is used in RB schemes to estimate the \ac{agf}. 
\increase

Before passing to the next section,
we define the ideal and noisy versions of a channel labelled by a group element.
Let $g\in \mathbb{G}$, we call $\Lambda_g$ the ideal channel corresponding to \(g\).
Then if \(\mathcal{E}_g\) is a channel associated with the noise accompanying
the action of  the
gate \(\Lambda_g\),
the noisy version of \(\Lambda_g\) is 
\begin{equation} \label{eq:definition_noise}
  \tilde{\Lambda}_g \coloneqq  \Lambda_{\mathcal{E}_g} \Lambda_g.
\end{equation}
Using the tools of representation theory and quantum channels presented above,
we then formulate our generalisation of \ac{db}.
\increase

%group $C_3\ltimes C_9^{\times 2} =
%C_3\ltimes(C_9\times C_9)$, 
%\(\ltimes\) denoting semidirect product.
\section{Approach}
\label{sec:approach}
We are now ready to describe our approach to articulating and solving the problem of benchmarking a universal set of qutrit gates.
First we introduce the HDG as a generalisation of the dihedral group,
needed for generalising qubits to qutrits.
Then we elaborate on our benchmarking scheme for the HDG.
We discuss the formal properties our scheme generalises from the qubit case.
\subsection{Hyperdihedral group}\label{sec:hdg-construction}
\resetexample
We now introduce our generalisation of $D_8$, which we call \ac{hdg}.
Our extension of \ac{db} is based on a unitary irreducible representation (unirrep) of the \ac{hdg}.
We establish this representation in the following two paragraphs.
The \ac{hdg} is the semidirect product, we formally specify the product later, between \(C_3\) and \(C_9^{\times 2}\).
We justify the choice of the \ac{hdg} in Appendix~\S\ref{subsec:criteria}.
We discuss the characterisation of other diagonal gates at the end of this subsection.
\increase

The unirrep for \ac{hdg} is defined using two auxiliary representations.
The first auxiliary representation is \(\repX\colon C_3\to
\mathscr{M}\).
If the abstract elements of the order-three cyclic group \(C_3\) are \( \{a^{k}\colon k\in [3]\} \),
the mapping \(\repX\) is \(\repX{a^{k}} = X^{k}\), where \(X\) is given in
Eq.~\eqref{eq:def-x-z}.
\increase

The second auxiliary representation 
is now introduced and used to define the unirrep we use for the HDG.
Consider the mapping \(\repC\colon
C_9^{\times 2}\to \mathscr{M}\).
If the elements of \(C_9^{\times 2}\) are \(\bm{\alpha} = (\alpha_0, \alpha_1)\in [9]^{\times 2}\),
then \(\repC{\bm{\alpha}} = T^{\alpha_0} (T')^{\alpha_1}\),
where \(T' \coloneqq \diag[\omega_9^{2},\omega_9^{6},\omega_9]\).
Using $\repX$ and $\repC$, 
the \ac{hdg} irrep our scheme uses is
\begin{equation}\label{eq:the-irrep}
\gamma 
\colon \mathrm{HDG} \to \mathscr{M}
\colon
(a^{k},\bm{\alpha}) \mapsto \repX(a^{k})\repC{\bm{\alpha}}.
\end{equation}
Notice \(\range(\gamma) = \langle T, X\rangle\), which is reminiscent of \(D_8\).
\increase

%\hdg{The generator $X$ of $C_3$ acts on $C_9^{\times 2}$ by an authormorphism $\phi$ permuting the entries on the diagonal, so it}
%Consider an automorphism of \(C_9^{\times 2} = \langle T, %T'\rangle\)
%\hdg{There is a minor problem with the notation here because previously you do not use $\ltimes_\phi$ but just $\ltimes$.  Also there are too many ``defines'', ``definition'' and ``defined'' in the previous few lines.}
%that 
We now provide the definition of the HDG.
Consider the automorphism \(\phi \in \Aut(C_9^{\times 2})\) is
\begin{equation}
\phi(T) \coloneqq  T^{3}(T')^{4},
\quad 
\phi(T') \coloneqq T^{8}(T')^{5}.
\end{equation}
Considering \(\phi\), the HDG is completely defined by
\begin{equation}
  \mathrm{HDG} \coloneqq
  C_3\ltimes_{\phi}C_9^{\times 2};
\end{equation}
the mapping \(\phi\) depends on \(T\).
Additional details can be found in Appendix~\ref{app:computation-hdg-elemes}.
\increase

\comentario{reb:fewer-than-clifford}
We discuss several properties of the \ac{hdg} and the resulting RB scheme.
The \ac{hdg},
consisting of 243 group elements,
requires only 81 gates when global phases are removed~\cite{Morvan2021}.
Consequently,
our scheme uses fewer gates than Clifford-based RB schemes.
It is worth mentioning that \(H \not\in \range(\gamma)\),
which is a property shared with \ac{db}.

Our scheme has another two additional characteristics useful in practical
settings.
The entire set of HDG gates is generated solely by the X and T gates,
which also enjoy a simplified multiplication rule between group elements.
The \ac{agf} and \ac{sp},
derived from averaging over the \ac{hdg},
are dependent on two complex parameters.
\increase

The \ac{hdg} is a natural generalization of \(D_8\);
like \(D_8\),
it has a semidirect product structure~\cite{altmann77}.
As a result of the semidirect product structure of the HDG,
group elements and their products can be straightforwardly expressed as powers of the generating elements, as done in 
Appendix~\ref{app:computation-hdg-elemes}.
Thus, 
sampling from the HDG is straightforward and does not require
approximate methods, as is generally necessary for arbitrary finite groups~\cite{Franca2018}.
\increase

We now discuss the prerequisites of our scheme.
Our scheme requires three primitive gates (X, T, and H), \ac{spam} of 
\(\ket0\) and $\ket+\coloneqq H \ket0$, and the construction of circuits with a depth of up to 200~\ac{hdg} gates.
\comentario{reb:gates-needed}
Among these gates, the X and T gates are the generators of the benchmarked gate
set, whereas the H gate is only required to prepare the state
\(\ket{+}\).
\increase

Current qutrit experiments satisfy the requirements of our scheme~\cite{Kononenko2021,Morvan2021}.
For instance, the \ac{bi}~\cite{Morvan2021} uses primitive gates for
rotations in the subspaces \(\Span(\ket0, \ket{1})\) and \(\Span(\ket{1}, \ket{2})\). 
These authors have also reported the composition of more than 200 qutrit gates.
These characteristics support the claim that our scheme is currently feasible.
\increase

\comentario{reb:benchmarking-other-gates}
Our scheme is not limited to the characterisation of \(T\) in Eq.~\eqref{eq:Tgate}.
By substituting \(T\) by any other diagonal matrix (in the computational basis) with order at least
three, the construction of the HDG can be applied to such gate.
The resulting representation has the same irrep decomposition as the HDG.
Thus, our scheme is useful to characterise any diagonal gate with order at
least three.
\increase

\comentario{reb:why-t-gate}
  We chose to employ the T gate, as defined in Eq.~\eqref{eq:Tgate}, for
  it enables universal quantum computing. Using non-Clifford gates
  like \(T\) is beneficial due to the availability of established
  magic-state distillation procedures for generating such a gate. Furthermore,
  the use of magic-state distillation is notably advantageous, as this method
  has been integrated into error-correcting codes~\cite{Campbell2014}.
\increase

\section{Results}\label{sec:results}
\resetexample
We now provide the expressions for the AGF and the SP resulting from using a HDG
gate set.
These expressions correspond to our genersalisation to qutrits of dihedral benchmarking.
We also show our scheme is made, as Clifford RB schemes are, SPAM-error
independent by adding a projector to the SP expression.
\subsection{Survival probability and average gate fidelity}\label{sec:gateindependent}
\resetexample

\comentario{reb:feasible-agf}
We introduce our scheme to characterise a universal gate set,
  which is our generalisation for qutrits of \ac{db}.
Our scheme feasibly estimates the \ac{agf} of the \ac{hdg} gate set.
Our analysis assumes every gate-set member has the same noise,
which is referred to as gate-independent analysis.
It is worth mentioning that our scheme is compatible with the Fourier~transform
method~\cite{Merkel2021, Wallman_2018}.
We introduce our scheme first by presenting the twirl computed over the
\ac{hdg} and then the expressions for the \ac{agf} and the \ac{sp}.
\increase

We now write the explicit expression of the twirl.
We start by considering the projectors onto the different representation
spaces~\cite{serre1977} in Eq.~\eqref{eq:decomposition-irrep-hdg} of
Appendix~\ref{app:proofs}:
\(
\Pi_{\mathbb{I}},
\Pi_{\Gamma_0},
\Pi_{\Gamma_0^{*}},
\Pi_{\Gamma_+}, 
\Pi_{\Gamma_+^{*}}.
\)
The eigenvalues are
\begin{equation}
  \label{eqs:twirl_entries}
  \lambda_\Gamma(\mathcal{E}) \coloneqq 
  \frac{\trace(\Lambda_{\mathcal{E}}\Pi_\Gamma)}{\trace(\Pi_\Gamma
  \Pi_\Gamma^{\top})}.
\end{equation}
Then the twirl of a channel $\mathcal{E}$ over the \ac{hdg} is
\begin{equation}
  \mathcal{T}_{\mathcal{E}} =
    \sum_{\Gamma \in \{
\Gamma_{\mathbb I}, 
\Gamma_{0}, 
\Gamma_{0}^*, 
\Gamma_{+}, 
\Gamma_{+}^*
\}}
    \lambda_\Gamma(\mathcal{E}) \Pi_\Gamma.
    \label{eq:matrix_plrep_qutrit}
  \end{equation}
From Eq.~\eqref{eq:matrix_plrep_qutrit}, there are only 
two non-trivial complex entries: $\lambda_0$ and $\lambda_+$. 
Let \(\varsigma\in \{0,+\}\),
then
the parameters $\lambda_\varsigma$ are conveniently written in polar form:
\begin{equation}\label{eq:twirl_entries_explicit}
  \lambda_\varsigma = r_\varsigma \exp(\myi \varphi_\varsigma).
\end{equation}
%\begin{subequations}\label{eq:twirl_entries_explicit}
%\begin{align}
  %\lambda_0 &= r_0 \exp(\myi \varphi_0),\label{eq:phase1_qutrit}\\
     %\lambda_+ &= r_+ \exp(\myi \varphi_+)\label{eq:phase2_qutrit}.
%\end{align}
%\end{subequations}
\increase

We now write the \ac{sp} in our scheme.
The gate-independent conditions means that for all group members
\(g\in\mathrm{HDG}\), the noisy channel has the form
\begin{equation}
  \tilde{\Lambda}_g = \Lambda_{\mathcal{E}}\Lambda_g;
    \label{eq:gate_independent_condition}
\end{equation}
that is, every \ac{gs} member has the same noise channel \(\mathcal{E}\).
Let $\rho$ be a state, $E\in \{\rho, \mathbb{I}-\rho\}$, and \(m\) a positive integer.
Using the assumption of Eq.~\eqref{eq:gate_independent_condition},
the \ac{sp} for the HDG is
\begin{equation}\label{eq:surv_prob_gralgroup}
\Pr(m;\rho, E, \mathrm{HDG}) =
\bbra{E} \Lambda_{\mathcal{E}} \mathcal{T}_{\mathcal{E}}^{m} \kket{\rho}.
\end{equation}
We rewrite Eq.~\eqref{eq:surv_prob_gralgroup} knowing
\(\mathcal{T}_{\mathcal{E}}\) is diagonal to obtain:
\begin{equation}
    \Pr(m;\rho, E, \mathrm{HDG}) =
    \sum_{\Gamma}
    \lambda_\Gamma^{m}(\mathcal{E})
    \bbra{E} \Lambda_{\mathcal{E}} \Pi_\Gamma \kket{\rho},
\label{eq:survivalexpanded}
\end{equation}
where the sum is over the irreps in the decomposition of Eq.~\eqref{eq:decomposition-irrep-hdg}.
\increase

We now show how Eq.~\eqref{eq:survivalexpanded} is 
used to estimate,
from the circuit depth \emph{vs}
\ac{sp} curve, the \ac{agf} over \ac{hdg}.
We obtain the expression for the \ac{sp} curve,
which is a decaying exponential function.
To express the \ac{sp} as a function of the twirl entries,
we consider the states
\begin{equation}\label{eqs:initial_states}
    \rho_\varsigma \coloneqq \ketbra{\varsigma},\quad \varsigma\in \{0,+\}.
\end{equation}
Substituting \(\rho\)
in Eq.~\eqref{eq:survivalexpanded} with $\rho_\varsigma$ given in Eq.~\eqref{eqs:initial_states} we obtain
the \ac{sp}
\begin{equation}\label{eq:survival_probability_gateindep}
       \Pr(m ;\rho_\varsigma, \rho_\varsigma, \mathrm{HDG})
    =
    \frac{1}{3} + \frac{2}{3} b_\varsigma r_\varsigma^m \cos(m \varphi_\varsigma).
\end{equation}
Thus, the \ac{sp} can be used to estimate the \ac{agf}.
\increase

At this point we now 
introduce the \ac{agf}
and
relate it to the \acp{sp} written in 
Eqs.~\eqref{eq:survival_probability_gateindep}.
The \ac{agf} computed over a group \(\mathbb{G}\) is defined as:
\begin{equation}
    \bar \Fidelity \coloneqq
    \underset{g\in G}{\mathbb{E}}\Fidelity(\tilde{\Lambda}_g, \Lambda_g),
    \label{eq:def-agf}
\end{equation}
where $\Fidelity(\tilde{\Lambda}_g, \Lambda_g)$ is the gate fidelity between the ideal 
and the noisy channel corresponding to $\Lambda_\sigma(g)$.
In general, for any pair of qutrit channels $\cal E$ and $\mathcal{E}'$, the 
\ac{agf} $\Fidelity(\mathcal{E}', \cal E)$ is~\cite{NIELSEN2002249}
\begin{equation}
    \Fidelity({\mathcal{E}'}, {\cal E})
    \coloneqq
    \frac{1}{12} \trace{\Lambda_{\mathcal{E}'}^{\dagger} \Lambda_{\mathcal{E}}}
      +
    \frac{1}{4}.
    \label{eq:definition_agf}
\end{equation}
\increase

Next, we write the \ac{agf} in terms of the twirl parameters.
For gate-independent benchmarking, the quantity estimated by \ac{hdg} benchmarking~\cite{dugas2015} is 
the \ac{agf} 
between the twirl and the identity
\begin{equation}
    \bar \Fidelity =  \Fidelity(\mathcal{T}_{\mathcal{E}}, \mathbb{I}) = 
\frac{1}{12} \trace\mathcal{T}_{\mathcal{E}}+\frac{1}{4},
    \label{eq:operation_definition_fidelity}
\end{equation}
where $\mathcal{T}_{\mathcal{E}}$ 
is defined in Eq.~\eqref{eq:twirl_def}.
For the qutrit \ac{hdg}, using Eqs.~\eqref{eq:survival_probability_gateindep}:
\begin{equation}
    \bar \Fidelity = \Fidelity(\mathcal{T}, \mathbb{I}) \\= 
    %\frac{3(1 + 2 r_0\cos(\varphi_0)+ 6 r_{+}\cos(\varphi_{+}))+9}{36}.
    \frac{1}{12}(1 + 2 r_0\cos\varphi_0+ 6 r_{+}\cos\varphi_{+}) + \frac{1}{4}.
    %\frac{3(1 + 2 r_0\cos(\varphi_0)+ 6 r_{+}\cos(\varphi_{+}))+9}{36}.
    \label{eq:rbnumber_qutrit}
\end{equation}
Note how the quantities \(b_\varsigma\) in 
Eq.~\eqref{eq:survival_probability_gateindep} are not needed to estimate the \ac{agf}.
\increase

It is possible to neglect the phases in Eq.~\eqref{eq:rbnumber_qutrit}:
we justify in Appendix~\ref{app:real_part}
that for  high-fidelity configurations $\varphi_0\ll 1$ and $\varphi_{+}\ll 1$.
Thus, we simplify Eq.~(\ref{eq:rbnumber_qutrit}) to
\begin{equation}
    \bar \Fidelity = \Fidelity(\mathcal{T}, \mathbb{I}) \approx
    \frac{1}{12}(1 + 2 r_0+ 6 r_{+}) + \frac{1}{4}.
    \label{eq:rbnumber_qutrit_real}
\end{equation}
Notice that the previous approximation for the \ac{agf} is always valid.
However, for large values of \(m\), 
the single exponential approximation for the \ac{sp} could fail; 
we study the validity of the single-exponential approximation in Appendix~\ref{app:real_part}.
\increase

\subsection{Removal of SPAM-error contributions}
\resetexample
An important feature of Clifford \acl{rb} schemes is their independence of \ac{spam} errors~\cite{gambetta2012}.
However, the \ac{hdg} \ac{sp} given in Eq.~\eqref{eq:survival_probability_gateindep} is not \ac{sef}.
One way to overcome this limitation is by computing a projector~\cite{Franca2018} that,
when multiplied with the twirl, leads to an expression of the survival probability with a single parameter;
thus our scheme is \ac{sef}.
\increase

The projector-based method for term removal is not the only option and may sometimes be superfluous.
There are known alternatives to this approach~\cite{Helsen_Xue_Vandersypen_Wehner_2019, claes2021}.
Furthermore,
if a \ac{gs} achieves a fidelity of approximately \( \bar{\Fidelity} \approx 0.99 \),
the need for SPAM removal techniques diminishes,
as illustrated in \S\ref{section:numerics} and explored in other studies~\cite{chen2022}.
\increase

We now compute the projectors.
The Choi matrix of the X and Z gates satisfies
\begin{equation}\label{eqs:projectors_spamfree}
\projector_\varsigma \coloneqq 
\sum_{k\in [3]} \Lambda_{Q_{\varsigma}^{k}},
\end{equation}
where $\varsigma\in\{0,+\}$, $Q_0\coloneqq X$,  $Q_+ \coloneqq Z$.
The projectors \(\mathbb{P}_{\varsigma}\) satisfy:
\(\mathbb{P}_+ \Pi_{\Gamma_0} = \mathbb{P}_+ \Pi_{\Gamma_0^{*}} = \mathbf{0}\)
and \(\mathbb{P}_0 \Pi_{\Gamma_+} =  \mathbb{P}_0\Pi_{\Gamma_+^{*}} = \mathbf{0}\), where
\(\mathbf{0}\) is the null matrix in \(\mathscr{M}^{\otimes 2}\).
\increase

By multiplying 
\(\mathcal{T}_{\mathcal{E}}\)
from the left by
\(\projector_\varsigma\)
we remove every parameter in Eq.~\eqref{eq:matrix_plrep_qutrit} except
 \(\lambda_\varsigma\) .
Therefore, using a modified \ac{sp}---with powers of the clock and shift matrices---%
we can obtain a \ac{sp}
which depends
only on selected parameters, independently 
of the initial state and the final measurement.
We compute such \ac{sp} in the next paragraph.
\increase

The modified \ac{sp}---by which we mean including 
the projectors in
Eq.~\eqref{eqs:projectors_spamfree}---is
\begin{align}
   &\check{\Pr}(m; \rho, E, \mathrm{HDG};\varsigma,k)
   \coloneqq
   \bbra{E}\Lambda_{Q^{k}_{\varsigma}} \Lambda_{\mathcal{E}}
   \mathcal{T}_{\mathcal{E}}^{m} \kket{\rho},
        \label{eq:p_check}
\end{align}
where $k\in[3]$.
Using Eq.~\eqref{eq:p_check},
we reach the \ac{sef} version of the
\acp{sp}~\eqref{eq:survival_probability_gateindep}:
\begin{equation}\label{eqs:survivalprobability_gatedependent_SPAMfree}
{\Pr}_{\varsigma}^{\rm SEF} \coloneqq
\sum_{k\in [3]} \check{\Pr}(m; \rho, E, \mathrm{HDG};\varsigma,k)
=
\lambda_{\mathbb I} + 2\Re{(\lambda_\varsigma^m \alpha_\varsigma^{\text{SEF}})},
\end{equation}
where \(\alpha_\varsigma^{\mathrm{SEF}}\in \mathbb{C}\) are constants absorbing \ac{spam} contributions.
Eq.~\eqref{eqs:survivalprobability_gatedependent_SPAMfree} shows that, even if the coefficients
depend on the initial state preparation, the eigenvalues \(\lambda_\varsigma\) remain unchanged so that the expression
of $\bar \Fidelity$ in Eq.~\eqref{eq:def-agf} is \ac{sef}.
\increase

\section{Numerics}\label{section:numerics}
\resetexample
In this section we numerically investigate the feasibility of our scheme.
Our study is done by comparing the variance of the Clifford and HDG gate sets.
This is done using experimental resources reported for a transmon
qutrit~\cite{Morvan2021}.
Our results show that both variances are qualitatively similar.
Thus, given that the experimental resources required for our scheme are similar to those of Clifford RB,
if Clifford RB can be implemented,
our scheme can likewise be appropriately executed.

\subsection{Noise model}\label{subsec:error_model}
\resetexample
We introduce examples of channels used to add noise to \ac{hdg} gates.
These channels are motivated by the features of the BI and the noise models
presented elsewhere~\cite{Wallman_2018}.
%We assign different types of noise to each gate, yet they all possess the same \ac{agf}.
\increase

In determining the appropriate noise for each gate,
we observe the following distinction:
elements within the \ac{hdg} fall into two distinct categories—diagonal matrices and powers of $X$.
Notably,
the $X$ gates' implementation differs from that of the diagonal gates~\cite{Blok2021}.
Given this difference,
we introduce specific noise types for each: for the diagonal matrices,
we incorporate noise by adding a phase to the state \(\ket{1}\),
whereas for the powers of $X$, we introduce an over-rotation error.
\increase

In our example, the over-rotation error corresponds to adding a phase to the state $\ket{1}$.
Thus we
represent this noise by conjugating a state by the following unitary matrix:
\begin{equation}\label{eq:error-model-phif}
  U_{\varphi}
  \coloneqq 
\begin{bmatrix}
  1 & & \\
  & \exp(\myi \varphi) & \\
  & & 1
\end{bmatrix}.
\end{equation}
For the over-rotation error,
the mapping corresponds to the conjugation of an state by a matrix of the form
\begin{equation}\label{eq:error-model-psiF}
  U_{(\psi, \xi)}
  \coloneqq 
    V \exp(-\myi \psi M_{01}) \exp(-\myi \xi M_{12}) V^{\dagger},
\end{equation}
where \(\psi,\xi\in [0,2\pi)\),
$V\sim \mathsf{Haar}({\rm SU}(3))$ is a matrix
randomly sampled using the ${\rm SU}(3)$ Haar measure~\cite{deguise2018}, and
\begin{equation}
    M_{01} \coloneqq \begin{bmatrix}
    1 & & \\
      &-1 & \\
      & & 1
    \end{bmatrix},\quad
    M_{12} \coloneqq \begin{bmatrix}
    1 & & \\
      &1 & \\
      & & -1
    \end{bmatrix}.
\end{equation}
\increase

\subsection{Survival probability statistics}
\resetexample
We now analyse numerically the \ac{sp}. 
The primary objective of this examination is to highlight the similarities between the variances of the Clifford and \ac{hdg} \acp{gs}. 
Notably,
similar variance behaviours suggest a similar number of samples required across both methods. 
Given our reliance on a small subset of the Clifford gate set,
excluding \(T\),
the feasibility of our scheme is related to this sample count. 
\increase

Through a numerical analysis,
we determine the variance of the \ac{hdg} \ac{sp} when subjected to noise.
This is important, as the determination of the number of samples
required depends on the variance.
Although there is a model for the
variance~\cite{Helsen2019}, it includes numerous
parameters; these parameters limit its practicality.
\increase

Alternatively, 
a two-parameter empirical model is stated
for qubit Clifford \ac{rb}~\cite{Itoko_Raymond_2021}.
Unfortunately, 
this latter model asymptotically approaches zero.
This behaviour is not seen in qutrits,
where the variance converges to a non-zero value.
\increase

The variance of the \ac{sp} is
\begin{equation}
\Variance(m;\rho, E, \mathbb{G})
\coloneqq
\bbra{E^{\otimes 2}}
\Lambda_{\mathcal{E}}^{\otimes 2}
(\mathcal{T}_{\mathcal{E}^{\otimes 2}})^{m} \kket{\rho^{\otimes 2}}
\\-
\bbra{E^{\otimes 2}}
\Lambda_{\mathcal{E}}^{\otimes 2}
(\mathcal{T}^{\otimes 2}_{\mathcal{E}})^{m}
\kket{\rho^{\otimes 2}},
\end{equation}
where 
$m$ is the circuit depth,
\(\mathcal{T}_{\mathcal{E}^{\otimes 2}} \coloneqq \mathbb{E}_g \Lambda_g^{\otimes 2} \Lambda_{\mathcal{E}}^{\otimes 2}(\Lambda_g^{\otimes 2})^{\dagger} \),
\(\rho\) is the initial state, and $E$ the final measurement~\cite{Helsen2019}.
An example of the variance of the \ac{sp} (for qutrit Clifford and HDG)
is presented in Fig.~\ref{fig:random_variable}.
\increase

\begin{figure}[ht]
    \centering
    \includegraphics{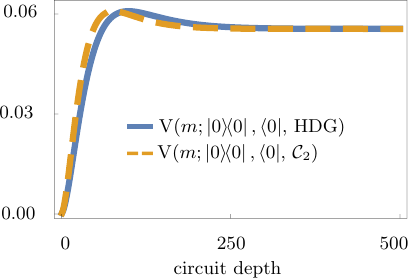}
    \caption{
    This figure shows the variance for two \acp{sp}, differentiated by 
    the \ac{gs} used in the simulation: blue and orange lines
    correspond to the \ac{hdg} and Clifford \acp{gs}, respectively;
    the variable \(m\) in the argument of \(\mathrm{V}\) denotes the circuit
    depth.
    We observe that 
    the two curves are qualitatively similar.
     In the plot we present the variance for the SP
     for a configuration with $\Fidelity(\Lambda_{U_{\varphi}}) = 0.9999$ and
     $\Fidelity(\Lambda_{(\psi,\psi)}) = 0.99$ (in Eqs.~\eqref{eq:error-model-phif} and \eqref{eq:error-model-psiF}), using $\ket{0}$ as initial state.
    }
    \label{fig:random_variable}
\end{figure}

These numerical results show that the variance of the \ac{sp} 
for the \ac{hdg} \ac{gs} is qualitatively similar to the Clifford case.
Consequently, it is reasonable to expect that the number of required samples
should be comparable.
\increase

\section{Discussion}\label{sec:discussion}
\resetexample
%Eqs.~\eqref{eq:survival_probability_gateindep} and~\eqref{eqs:survivalprobability_gatedependent_SPAMfree}
%encapsulate our generalisation
%of \ac{db} for qutrits.
Our extension of the \acl{rb} scheme to characterise T~gates
is given by the expressions ~\eqref{eq:survival_probability_gateindep}
and \eqref{eqs:survivalprobability_gatedependent_SPAMfree}, together with the
qutrit \ac{hdg} \acp{gs}.
To summarise our steps, we obtained the expression for the \ac{hdg} \ac{agf} in
Eq.~\eqref{eq:rbnumber_qutrit_real}.
We then showed that the parameters of the \ac{hdg} \ac{agf} are accessible 
via fit from the survival probabilities in
Eqs.~\eqref{eq:survival_probability_gateindep}
and Eqs.~\eqref{eqs:survivalprobability_gatedependent_SPAMfree} respectively
for ideal 
and noisy---subject to \ac{spam} errors---initial states.
\increase

Next we examined the experimental resources required for our scheme.
Compared with the 216 gates of the Clifford group, the 81 gates of the quotient
\ac{hdg}/\(\langle \omega_3\mathbb{I}\rangle\) 
reduce by $\sim2/3$ the number of gates required for benchmarking and
provide
a more efficient scheme than \acl{ib},
with respect to the gates needed to be synthesised~\cite{MagesanEaswar2012Emoq}.
\increase

We then analysed the practical properties of our scheme.
By enforcing the condition of a diagonal twirl, we simplified the data analysis required
for computing the \ac{agf}.
This is especially clear compared to the non-diagonal cases~\cite{Franca2018,helsen22prx}. 
Additionally, the semidirect product structure of the HDG allows the efficient sampling of \ac{hdg} elements, eliminating the need for approximate Markov chain methods~\cite{Franca2018}.
Finally, we asserted our scheme is feasible as it is based on gates (X, H, and T) that can be implemented by current 
platforms~\cite{Morvan2021,Kononenko2021}.
\increase

In \S\ref{section:numerics},
we simulated our scheme using the experimental parameters from a transmon qutrit~\cite{Morvan2021}.
Our findings indicate that the statistics of the \ac{hdg} \ac{sp} closely
resemble those of the Clifford \ac{gs}~\cite{Helsen2019}.
Consequently,
comparable experimental resources—such as measurements and the number of randomly sampled circuits—are required,
and the same statistical tools can be employed.
\increase

\comentario{reb:interleaved-benchmarking}
 We conclude our discussion with a comment on 
non-Clifford interleaved
benchmarking~\cite{dugas2015,MagesanEaswar2012Emoq}.
The HDG can be used to characterise diagonal gates.
However, our schemes and the construction of the HDG, cannot be used to
characterise the X gate.
The reason is that, by removing the X gate from the HDG, we obtain an abelian
subgroup. Twirling by an abelian group leads to a twirl with more parameters than for the
HDG~\cite{amaro2024}.
\increase

\section{Conclusions}
\resetexample
We have extended the \acl{rb} scheme to characterise qutrit T gates.
Our scheme relies on our generalisation of the dihedral~group for qubits, which
we call the hyperdihedral~group.
Using the hyperdihedral group, we derived 
 closed-form expressions
for the \acl{sp} and \acl{agf} for gate sets that include a qutrit T gate.
Our scheme characterises a diagonal qutrit T gate, the non-Clifford generator of a
universal qutrit \ac{gs}.
Thus, our extension completes the characterisation of a universal qutrit \ac{gs}.
Finally, to prove our scheme's feasibility, we simulated its application on a
transmon qutrit T gate~\cite{Morvan2021}.

% \begin{acknowledgments}
DAA, BCS, and HdG acknowledge support from 
Natural Sciences and Engineering Research Council of Canada
and the Government of Alberta.
% \end{acknowledgments}

% \begin{bibliography}
\bibliography{manuscript.bib}

%apsrev4-2.bst 2019-01-14 (MD) hand-edited version of apsrev4-1.bst
%Control: key (0)
%Control: author (72) initials jnrlst
%Control: editor formatted (1) identically to author
%Control: production of article title (-1) disabled
%Control: page (0) single
%Control: year (1) truncated
%Control: production of eprint (0) enabled
\providecommand{\noopsort}[1]{}\providecommand{\singleletter}[1]{#1}%
\begin{thebibliography}{60}%
\makeatletter
\providecommand \@ifxundefined [1]{%
 \@ifx{#1\undefined}
}%
\providecommand \@ifnum [1]{%
 \ifnum #1\expandafter \@firstoftwo
 \else \expandafter \@secondoftwo
 \fi
}%
\providecommand \@ifx [1]{%
 \ifx #1\expandafter \@firstoftwo
 \else \expandafter \@secondoftwo
 \fi
}%
\providecommand \natexlab [1]{#1}%
\providecommand \enquote  [1]{``#1''}%
\providecommand \bibnamefont  [1]{#1}%
\providecommand \bibfnamefont [1]{#1}%
\providecommand \citenamefont [1]{#1}%
\providecommand \href@noop [0]{\@secondoftwo}%
\providecommand \href [0]{\begingroup \@sanitize@url \@href}%
\providecommand \@href[1]{\@@startlink{#1}\@@href}%
\providecommand \@@href[1]{\endgroup#1\@@endlink}%
\providecommand \@sanitize@url [0]{\catcode `\\12\catcode `\$12\catcode `\&12\catcode `\#12\catcode `\^12\catcode `\_12\catcode `\%12\relax}%
\providecommand \@@startlink[1]{}%
\providecommand \@@endlink[0]{}%
\providecommand \url  [0]{\begingroup\@sanitize@url \@url }%
\providecommand \@url [1]{\endgroup\@href {#1}{\urlprefix }}%
\providecommand \urlprefix  [0]{URL }%
\providecommand \Eprint [0]{\href }%
\providecommand \doibase [0]{https://doi.org/}%
\providecommand \selectlanguage [0]{\@gobble}%
\providecommand \bibinfo  [0]{\@secondoftwo}%
\providecommand \bibfield  [0]{\@secondoftwo}%
\providecommand \translation [1]{[#1]}%
\providecommand \BibitemOpen [0]{}%
\providecommand \bibitemStop [0]{}%
\providecommand \bibitemNoStop [0]{.\EOS\space}%
\providecommand \EOS [0]{\spacefactor3000\relax}%
\providecommand \BibitemShut  [1]{\csname bibitem#1\endcsname}%
\let\auto@bib@innerbib\@empty
%</preamble>
\bibitem [{\citenamefont {Wang}\ \emph {et~al.}(2020)\citenamefont {Wang}, \citenamefont {Hu}, \citenamefont {Sanders},\ and\ \citenamefont {Kais}}]{Wang2020}%
  \BibitemOpen
  \bibfield  {author} {\bibinfo {author} {\bibfnamefont {Y.}~\bibnamefont {Wang}}, \bibinfo {author} {\bibfnamefont {Z.}~\bibnamefont {Hu}}, \bibinfo {author} {\bibfnamefont {B.~C.}\ \bibnamefont {Sanders}},\ and\ \bibinfo {author} {\bibfnamefont {S.}~\bibnamefont {Kais}},\ }\href {https://doi.org/10.3389/fphy.2020.589504} {\bibfield  {journal} {\bibinfo  {journal} {Front. Phys.}\ }\textbf {\bibinfo {volume} {8}},\ \bibinfo {pages} {479} (\bibinfo {year} {2020})}\BibitemShut {NoStop}%
\bibitem [{\citenamefont {Nielsen}\ and\ \citenamefont {Chuang}(2010)}]{MikeAndIke}%
  \BibitemOpen
  \bibfield  {author} {\bibinfo {author} {\bibfnamefont {M.~A.}\ \bibnamefont {Nielsen}}\ and\ \bibinfo {author} {\bibfnamefont {I.~L.}\ \bibnamefont {Chuang}},\ }\href@noop {} {\emph {\bibinfo {title} {Quantum {C}omputation and {Q}uantum {I}nformation}}},\ \bibinfo {edition} {10th}\ ed.\ (\bibinfo  {publisher} {Cambridge University Press},\ \bibinfo {address} {Cambridge, UK},\ \bibinfo {year} {2010})\BibitemShut {NoStop}%
\bibitem [{\citenamefont {Imany}\ \emph {et~al.}(2019)\citenamefont {Imany}, \citenamefont {Jaramillo-Villegas}, \citenamefont {Alshaykh}, \citenamefont {Lukens}, \citenamefont {Odele}, \citenamefont {Moore}, \citenamefont {Leaird}, \citenamefont {Qi},\ and\ \citenamefont {Weiner}}]{Imany2019}%
  \BibitemOpen
  \bibfield  {author} {\bibinfo {author} {\bibfnamefont {P.}~\bibnamefont {Imany}}, \bibinfo {author} {\bibfnamefont {J.~A.}\ \bibnamefont {Jaramillo-Villegas}}, \bibinfo {author} {\bibfnamefont {M.~S.}\ \bibnamefont {Alshaykh}}, \bibinfo {author} {\bibfnamefont {J.~M.}\ \bibnamefont {Lukens}}, \bibinfo {author} {\bibfnamefont {O.~D.}\ \bibnamefont {Odele}}, \bibinfo {author} {\bibfnamefont {A.~J.}\ \bibnamefont {Moore}}, \bibinfo {author} {\bibfnamefont {D.~E.}\ \bibnamefont {Leaird}}, \bibinfo {author} {\bibfnamefont {M.}~\bibnamefont {Qi}},\ and\ \bibinfo {author} {\bibfnamefont {A.~M.}\ \bibnamefont {Weiner}},\ }\href {https://doi.org/10.1038/s41534-019-0173-8} {\bibfield  {journal} {\bibinfo  {journal} {npj Quantum Inf.}\ }\textbf {\bibinfo {volume} {5}},\ \bibinfo {pages} {59} (\bibinfo {year} {2019})}\BibitemShut {NoStop}%
\bibitem [{\citenamefont {Lanyon}\ \emph {et~al.}(2008)\citenamefont {Lanyon}, \citenamefont {Weinhold}, \citenamefont {Langford}, \citenamefont {O’Brien}, \citenamefont {Resch}, \citenamefont {Gilchrist},\ and\ \citenamefont {White}}]{Lanyon_Weinhold_Langford_OBrien_Resch_Gilchrist_White_2008}%
  \BibitemOpen
  \bibfield  {author} {\bibinfo {author} {\bibfnamefont {B.~P.}\ \bibnamefont {Lanyon}}, \bibinfo {author} {\bibfnamefont {T.~J.}\ \bibnamefont {Weinhold}}, \bibinfo {author} {\bibfnamefont {N.~K.}\ \bibnamefont {Langford}}, \bibinfo {author} {\bibfnamefont {J.~L.}\ \bibnamefont {O’Brien}}, \bibinfo {author} {\bibfnamefont {K.~J.}\ \bibnamefont {Resch}}, \bibinfo {author} {\bibfnamefont {A.}~\bibnamefont {Gilchrist}},\ and\ \bibinfo {author} {\bibfnamefont {A.~G.}\ \bibnamefont {White}},\ }\href {https://doi.org/10.1103/PhysRevLett.100.060504} {\bibfield  {journal} {\bibinfo  {journal} {Phys. Rev. Lett.}\ }\textbf {\bibinfo {volume} {100}},\ \bibinfo {pages} {060504} (\bibinfo {year} {2008})}\BibitemShut {NoStop}%
\bibitem [{\citenamefont {Randall}\ \emph {et~al.}(2015)\citenamefont {Randall}, \citenamefont {Weidt}, \citenamefont {Standing}, \citenamefont {Lake}, \citenamefont {Webster}, \citenamefont {Murgia}, \citenamefont {Navickas}, \citenamefont {Roth},\ and\ \citenamefont {Hensinger}}]{randall2015}%
  \BibitemOpen
  \bibfield  {author} {\bibinfo {author} {\bibfnamefont {J.}~\bibnamefont {Randall}}, \bibinfo {author} {\bibfnamefont {S.}~\bibnamefont {Weidt}}, \bibinfo {author} {\bibfnamefont {E.~D.}\ \bibnamefont {Standing}}, \bibinfo {author} {\bibfnamefont {K.}~\bibnamefont {Lake}}, \bibinfo {author} {\bibfnamefont {S.~C.}\ \bibnamefont {Webster}}, \bibinfo {author} {\bibfnamefont {D.~F.}\ \bibnamefont {Murgia}}, \bibinfo {author} {\bibfnamefont {T.}~\bibnamefont {Navickas}}, \bibinfo {author} {\bibfnamefont {K.}~\bibnamefont {Roth}},\ and\ \bibinfo {author} {\bibfnamefont {W.~K.}\ \bibnamefont {Hensinger}},\ }\href {https://doi.org/10.1103/PhysRevA.91.012322} {\bibfield  {journal} {\bibinfo  {journal} {Phys. Rev. A}\ }\textbf {\bibinfo {volume} {91}},\ \bibinfo {pages} {012322} (\bibinfo {year} {2015})}\BibitemShut {NoStop}%
\bibitem [{\citenamefont {Leupold}\ \emph {et~al.}(2018)\citenamefont {Leupold}, \citenamefont {Malinowski}, \citenamefont {Zhang}, \citenamefont {Negnevitsky}, \citenamefont {Alonso}, \citenamefont {Home},\ and\ \citenamefont {Cabello}}]{leupold18}%
  \BibitemOpen
  \bibfield  {author} {\bibinfo {author} {\bibfnamefont {F.}~\bibnamefont {Leupold}}, \bibinfo {author} {\bibfnamefont {M.}~\bibnamefont {Malinowski}}, \bibinfo {author} {\bibfnamefont {C.}~\bibnamefont {Zhang}}, \bibinfo {author} {\bibfnamefont {V.}~\bibnamefont {Negnevitsky}}, \bibinfo {author} {\bibfnamefont {J.}~\bibnamefont {Alonso}}, \bibinfo {author} {\bibfnamefont {J.}~\bibnamefont {Home}},\ and\ \bibinfo {author} {\bibfnamefont {A.}~\bibnamefont {Cabello}},\ }\href {https://doi.org/10.1103/PhysRevLett.120.180401} {\bibfield  {journal} {\bibinfo  {journal} {Phys. Rev. Lett.}\ }\textbf {\bibinfo {volume} {120}},\ \bibinfo {pages} {180401} (\bibinfo {year} {2018})}\BibitemShut {NoStop}%
\bibitem [{\citenamefont {Klimov}\ \emph {et~al.}(2003)\citenamefont {Klimov}, \citenamefont {Guzmán}, \citenamefont {Retamal},\ and\ \citenamefont {Saavedra}}]{Klimov_Guzman_Retamal_Saavedra_2003}%
  \BibitemOpen
  \bibfield  {author} {\bibinfo {author} {\bibfnamefont {A.~B.}\ \bibnamefont {Klimov}}, \bibinfo {author} {\bibfnamefont {R.}~\bibnamefont {Guzmán}}, \bibinfo {author} {\bibfnamefont {J.~C.}\ \bibnamefont {Retamal}},\ and\ \bibinfo {author} {\bibfnamefont {C.}~\bibnamefont {Saavedra}},\ }\href {https://doi.org/10.1103/PhysRevA.67.062313} {\bibfield  {journal} {\bibinfo  {journal} {Phys. Rev. A}\ }\textbf {\bibinfo {volume} {67}},\ \bibinfo {pages} {062313} (\bibinfo {year} {2003})}\BibitemShut {NoStop}%
\bibitem [{\citenamefont {Roy}\ \emph {et~al.}(2023)\citenamefont {Roy}, \citenamefont {Li}, \citenamefont {Kapit},\ and\ \citenamefont {Schuster}}]{Roy2022}%
  \BibitemOpen
  \bibfield  {author} {\bibinfo {author} {\bibfnamefont {T.}~\bibnamefont {Roy}}, \bibinfo {author} {\bibfnamefont {Z.}~\bibnamefont {Li}}, \bibinfo {author} {\bibfnamefont {E.}~\bibnamefont {Kapit}},\ and\ \bibinfo {author} {\bibfnamefont {D.}~\bibnamefont {Schuster}},\ }\href {https://doi.org/10.1103/PhysRevApplied.19.064024} {\bibfield  {journal} {\bibinfo  {journal} {Phys. Rev. Appl.}\ }\textbf {\bibinfo {volume} {19}},\ \bibinfo {pages} {064024} (\bibinfo {year} {2023})}\BibitemShut {NoStop}%
\bibitem [{\citenamefont {Kononenko}\ \emph {et~al.}(2021)\citenamefont {Kononenko}, \citenamefont {Yurtalan}, \citenamefont {Ren}, \citenamefont {Shi}, \citenamefont {Ashhab},\ and\ \citenamefont {Lupascu}}]{Kononenko2021}%
  \BibitemOpen
  \bibfield  {author} {\bibinfo {author} {\bibfnamefont {M.}~\bibnamefont {Kononenko}}, \bibinfo {author} {\bibfnamefont {M.~A.}\ \bibnamefont {Yurtalan}}, \bibinfo {author} {\bibfnamefont {S.}~\bibnamefont {Ren}}, \bibinfo {author} {\bibfnamefont {J.}~\bibnamefont {Shi}}, \bibinfo {author} {\bibfnamefont {S.}~\bibnamefont {Ashhab}},\ and\ \bibinfo {author} {\bibfnamefont {A.}~\bibnamefont {Lupascu}},\ }\href {https://doi.org/10.1103/PhysRevResearch.3.L042007} {\bibfield  {journal} {\bibinfo  {journal} {Phys. Rev. Res.}\ }\textbf {\bibinfo {volume} {3}},\ \bibinfo {pages} {L042007} (\bibinfo {year} {2021})}\BibitemShut {NoStop}%
\bibitem [{\citenamefont {Morvan}\ \emph {et~al.}(2021)\citenamefont {Morvan}, \citenamefont {Ramasesh}, \citenamefont {Blok}, \citenamefont {Kreikebaum}, \citenamefont {O’Brien}, \citenamefont {Chen}, \citenamefont {Mitchell}, \citenamefont {Naik}, \citenamefont {Santiago},\ and\ \citenamefont {Siddiqi}}]{Morvan2021}%
  \BibitemOpen
  \bibfield  {author} {\bibinfo {author} {\bibfnamefont {A.}~\bibnamefont {Morvan}}, \bibinfo {author} {\bibfnamefont {V.~V.}\ \bibnamefont {Ramasesh}}, \bibinfo {author} {\bibfnamefont {M.~S.}\ \bibnamefont {Blok}}, \bibinfo {author} {\bibfnamefont {J.~M.}\ \bibnamefont {Kreikebaum}}, \bibinfo {author} {\bibfnamefont {K.}~\bibnamefont {O’Brien}}, \bibinfo {author} {\bibfnamefont {L.}~\bibnamefont {Chen}}, \bibinfo {author} {\bibfnamefont {B.~K.}\ \bibnamefont {Mitchell}}, \bibinfo {author} {\bibfnamefont {R.~K.}\ \bibnamefont {Naik}}, \bibinfo {author} {\bibfnamefont {D.~I.}\ \bibnamefont {Santiago}},\ and\ \bibinfo {author} {\bibfnamefont {I.}~\bibnamefont {Siddiqi}},\ }\href {https://doi.org/10.1103/PhysRevLett.126.210504} {\bibfield  {journal} {\bibinfo  {journal} {Phys. Rev. Lett.}\ }\textbf {\bibinfo {volume} {126}},\ \bibinfo {pages} {210504} (\bibinfo {year} {2021})}\BibitemShut {NoStop}%
\bibitem [{\citenamefont {Fern{\'a}ndez~de Fuentes}\ \emph {et~al.}(2022)\citenamefont {Fern{\'a}ndez~de Fuentes}, \citenamefont {Botzem}, \citenamefont {Hudson}, \citenamefont {Itoh}, \citenamefont {Dzurak},\ and\ \citenamefont {Morello}}]{fernandez2022coherent}%
  \BibitemOpen
  \bibfield  {author} {\bibinfo {author} {\bibfnamefont {I.}~\bibnamefont {Fern{\'a}ndez~de Fuentes}}, \bibinfo {author} {\bibfnamefont {T.}~\bibnamefont {Botzem}}, \bibinfo {author} {\bibfnamefont {F.}~\bibnamefont {Hudson}}, \bibinfo {author} {\bibfnamefont {K.}~\bibnamefont {Itoh}}, \bibinfo {author} {\bibfnamefont {A.}~\bibnamefont {Dzurak}},\ and\ \bibinfo {author} {\bibfnamefont {A.}~\bibnamefont {Morello}},\ }\href@noop {} {\bibfield  {journal} {\bibinfo  {journal} {Bull. Am. Math. Soc.}\ } (\bibinfo {year} {2022})}\BibitemShut {NoStop}%
\bibitem [{\citenamefont {Lindon}\ \emph {et~al.}(2023)\citenamefont {Lindon}, \citenamefont {Tashchilina}, \citenamefont {Cooke},\ and\ \citenamefont {LeBlanc}}]{Lindon2022}%
  \BibitemOpen
  \bibfield  {author} {\bibinfo {author} {\bibfnamefont {J.}~\bibnamefont {Lindon}}, \bibinfo {author} {\bibfnamefont {A.}~\bibnamefont {Tashchilina}}, \bibinfo {author} {\bibfnamefont {L.~W.}\ \bibnamefont {Cooke}},\ and\ \bibinfo {author} {\bibfnamefont {L.~J.}\ \bibnamefont {LeBlanc}},\ }\href {https://doi.org/10.1103/PhysRevApplied.19.034089} {\bibfield  {journal} {\bibinfo  {journal} {Phys. Rev. Appl.}\ }\textbf {\bibinfo {volume} {19}},\ \bibinfo {pages} {034089} (\bibinfo {year} {2023})}\BibitemShut {NoStop}%
\bibitem [{\citenamefont {Fu}\ \emph {et~al.}(2022)\citenamefont {Fu}, \citenamefont {Liu}, \citenamefont {Ye}, \citenamefont {Wang}, \citenamefont {Zhang}, \citenamefont {Duan}, \citenamefont {Rong},\ and\ \citenamefont {Du}}]{Fu2022}%
  \BibitemOpen
  \bibfield  {author} {\bibinfo {author} {\bibfnamefont {Y.}~\bibnamefont {Fu}}, \bibinfo {author} {\bibfnamefont {W.}~\bibnamefont {Liu}}, \bibinfo {author} {\bibfnamefont {X.}~\bibnamefont {Ye}}, \bibinfo {author} {\bibfnamefont {Y.}~\bibnamefont {Wang}}, \bibinfo {author} {\bibfnamefont {C.}~\bibnamefont {Zhang}}, \bibinfo {author} {\bibfnamefont {C.-K.}\ \bibnamefont {Duan}}, \bibinfo {author} {\bibfnamefont {X.}~\bibnamefont {Rong}},\ and\ \bibinfo {author} {\bibfnamefont {J.}~\bibnamefont {Du}},\ }\href {https://doi.org/10.1103/PhysRevLett.129.100501} {\bibfield  {journal} {\bibinfo  {journal} {Phys. Rev. Lett.}\ }\textbf {\bibinfo {volume} {129}},\ \bibinfo {pages} {100501} (\bibinfo {year} {2022})}\BibitemShut {NoStop}%
\bibitem [{\citenamefont {Magesan}\ \emph {et~al.}(2012)\citenamefont {Magesan}, \citenamefont {Gambetta}, \citenamefont {Johnson}, \citenamefont {Ryan}, \citenamefont {Chow}, \citenamefont {Merkel}, \citenamefont {Da~Silva}, \citenamefont {Keefe}, \citenamefont {Rothwell}, \citenamefont {Ohki}, \citenamefont {Ketchen},\ and\ \citenamefont {Steffen}}]{MagesanEaswar2012Emoq}%
  \BibitemOpen
  \bibfield  {author} {\bibinfo {author} {\bibfnamefont {E.}~\bibnamefont {Magesan}}, \bibinfo {author} {\bibfnamefont {J.~M.}\ \bibnamefont {Gambetta}}, \bibinfo {author} {\bibfnamefont {B.}~\bibnamefont {Johnson}}, \bibinfo {author} {\bibfnamefont {C.~A.}\ \bibnamefont {Ryan}}, \bibinfo {author} {\bibfnamefont {J.~M.}\ \bibnamefont {Chow}}, \bibinfo {author} {\bibfnamefont {S.~T.}\ \bibnamefont {Merkel}}, \bibinfo {author} {\bibfnamefont {M.~P.}\ \bibnamefont {Da~Silva}}, \bibinfo {author} {\bibfnamefont {G.~A.}\ \bibnamefont {Keefe}}, \bibinfo {author} {\bibfnamefont {M.~B.}\ \bibnamefont {Rothwell}}, \bibinfo {author} {\bibfnamefont {T.~A.}\ \bibnamefont {Ohki}}, \bibinfo {author} {\bibfnamefont {M.~B.}\ \bibnamefont {Ketchen}},\ and\ \bibinfo {author} {\bibfnamefont {M.}~\bibnamefont {Steffen}},\ }\href@noop {} {\bibfield  {journal} {\bibinfo  {journal} {Phys. Rev. Lett.}\ }\textbf {\bibinfo {volume} {109}},\ \bibinfo {pages} {080505} (\bibinfo {year} {2012})}\BibitemShut {NoStop}%
\bibitem [{\citenamefont {Jafarzadeh}\ \emph {et~al.}(2020)\citenamefont {Jafarzadeh}, \citenamefont {Wu}, \citenamefont {Sanders},\ and\ \citenamefont {Sanders}}]{Jafarzadeh_Wu_Sanders_Sanders_2020}%
  \BibitemOpen
  \bibfield  {author} {\bibinfo {author} {\bibfnamefont {M.}~\bibnamefont {Jafarzadeh}}, \bibinfo {author} {\bibfnamefont {Y.-D.}\ \bibnamefont {Wu}}, \bibinfo {author} {\bibfnamefont {Y.~R.}\ \bibnamefont {Sanders}},\ and\ \bibinfo {author} {\bibfnamefont {B.~C.}\ \bibnamefont {Sanders}},\ }\href {https://doi.org/10.1088/1367-2630/ab8ab1} {\bibfield  {journal} {\bibinfo  {journal} {New J. Phys.}\ }\textbf {\bibinfo {volume} {22}},\ \bibinfo {pages} {063014} (\bibinfo {year} {2020})}\BibitemShut {NoStop}%
\bibitem [{\citenamefont {Luo}\ \emph {et~al.}(2019)\citenamefont {Luo}, \citenamefont {Zhong}, \citenamefont {Erhard}, \citenamefont {Wang}, \citenamefont {Peng}, \citenamefont {Krenn}, \citenamefont {Jiang}, \citenamefont {Li}, \citenamefont {Liu}, \citenamefont {Lu}, \citenamefont {Zeilinger},\ and\ \citenamefont {Pan}}]{Luo_Zhong_Erhard_Wang_Peng_Krenn_Jiang_Li_Liu_Lu_2019}%
  \BibitemOpen
  \bibfield  {author} {\bibinfo {author} {\bibfnamefont {Y.-H.}\ \bibnamefont {Luo}}, \bibinfo {author} {\bibfnamefont {H.-S.}\ \bibnamefont {Zhong}}, \bibinfo {author} {\bibfnamefont {M.}~\bibnamefont {Erhard}}, \bibinfo {author} {\bibfnamefont {X.-L.}\ \bibnamefont {Wang}}, \bibinfo {author} {\bibfnamefont {L.-C.}\ \bibnamefont {Peng}}, \bibinfo {author} {\bibfnamefont {M.}~\bibnamefont {Krenn}}, \bibinfo {author} {\bibfnamefont {X.}~\bibnamefont {Jiang}}, \bibinfo {author} {\bibfnamefont {L.}~\bibnamefont {Li}}, \bibinfo {author} {\bibfnamefont {N.-L.}\ \bibnamefont {Liu}}, \bibinfo {author} {\bibfnamefont {C.-Y.}\ \bibnamefont {Lu}}, \bibinfo {author} {\bibfnamefont {A.}~\bibnamefont {Zeilinger}},\ and\ \bibinfo {author} {\bibfnamefont {J.-W.}\ \bibnamefont {Pan}},\ }\href {https://doi.org/10.1103/PhysRevLett.123.070505} {\bibfield  {journal} {\bibinfo  {journal} {Phys. Rev. Lett.}\ }\textbf {\bibinfo {volume} {123}},\ \bibinfo {pages} {070505} (\bibinfo {year} {2019})}\BibitemShut {NoStop}%
\bibitem [{\citenamefont {Hu}\ \emph {et~al.}(2020)\citenamefont {Hu}, \citenamefont {Zhang}, \citenamefont {Liu}, \citenamefont {Cai}, \citenamefont {Ye}, \citenamefont {Guo}, \citenamefont {Xing}, \citenamefont {Huang}, \citenamefont {Huang}, \citenamefont {Li},\ and\ \citenamefont {Guo}}]{Hu_Zhang_Liu_Cai_Ye_Guo_Xing_Huang_Huang_Li_2020}%
  \BibitemOpen
  \bibfield  {author} {\bibinfo {author} {\bibfnamefont {X.-M.}\ \bibnamefont {Hu}}, \bibinfo {author} {\bibfnamefont {C.}~\bibnamefont {Zhang}}, \bibinfo {author} {\bibfnamefont {B.-H.}\ \bibnamefont {Liu}}, \bibinfo {author} {\bibfnamefont {Y.}~\bibnamefont {Cai}}, \bibinfo {author} {\bibfnamefont {X.-J.}\ \bibnamefont {Ye}}, \bibinfo {author} {\bibfnamefont {Y.}~\bibnamefont {Guo}}, \bibinfo {author} {\bibfnamefont {W.-B.}\ \bibnamefont {Xing}}, \bibinfo {author} {\bibfnamefont {C.-X.}\ \bibnamefont {Huang}}, \bibinfo {author} {\bibfnamefont {Y.-F.}\ \bibnamefont {Huang}}, \bibinfo {author} {\bibfnamefont {C.-F.}\ \bibnamefont {Li}},\ and\ \bibinfo {author} {\bibfnamefont {G.-C.}\ \bibnamefont {Guo}},\ }\href {https://doi.org/10.1103/PhysRevLett.125.230501} {\bibfield  {journal} {\bibinfo  {journal} {Phys. Rev. Lett.}\ }\textbf {\bibinfo {volume} {125}},\ \bibinfo {pages} {230501} (\bibinfo {year} {2020})}\BibitemShut {NoStop}%
\bibitem [{\citenamefont {Vashukevich}\ \emph {et~al.}(2022)\citenamefont {Vashukevich}, \citenamefont {Bashmakova}, \citenamefont {Golubeva},\ and\ \citenamefont {Golubev}}]{Vashukevich_Bashmakova_Golubeva_Golubev_2022}%
  \BibitemOpen
  \bibfield  {author} {\bibinfo {author} {\bibfnamefont {E.~A.}\ \bibnamefont {Vashukevich}}, \bibinfo {author} {\bibfnamefont {E.~N.}\ \bibnamefont {Bashmakova}}, \bibinfo {author} {\bibfnamefont {T.~Y.}\ \bibnamefont {Golubeva}},\ and\ \bibinfo {author} {\bibfnamefont {Y.~M.}\ \bibnamefont {Golubev}},\ }\href {https://doi.org/10.1088/1612-202X/ac45b2} {\bibfield  {journal} {\bibinfo  {journal} {Laser Phys. Lett.}\ }\textbf {\bibinfo {volume} {19}},\ \bibinfo {pages} {025202} (\bibinfo {year} {2022})}\BibitemShut {NoStop}%
\bibitem [{\citenamefont {Otten}\ \emph {et~al.}(2021)\citenamefont {Otten}, \citenamefont {Kapoor}, \citenamefont {Özgüler}, \citenamefont {Holland}, \citenamefont {Kowalkowski}, \citenamefont {Alexeev},\ and\ \citenamefont {Lyon}}]{Otten_Kapoor_Ozguler_Holland_Kowalkowski_Alexeev_Lyon_2021}%
  \BibitemOpen
  \bibfield  {author} {\bibinfo {author} {\bibfnamefont {M.}~\bibnamefont {Otten}}, \bibinfo {author} {\bibfnamefont {K.}~\bibnamefont {Kapoor}}, \bibinfo {author} {\bibfnamefont {A.~B.}\ \bibnamefont {Özgüler}}, \bibinfo {author} {\bibfnamefont {E.~T.}\ \bibnamefont {Holland}}, \bibinfo {author} {\bibfnamefont {J.~B.}\ \bibnamefont {Kowalkowski}}, \bibinfo {author} {\bibfnamefont {Y.}~\bibnamefont {Alexeev}},\ and\ \bibinfo {author} {\bibfnamefont {A.~L.}\ \bibnamefont {Lyon}},\ }\href {https://doi.org/10.1103/PhysRevA.104.012605} {\bibfield  {journal} {\bibinfo  {journal} {Phys. Rev. A}\ }\textbf {\bibinfo {volume} {104}},\ \bibinfo {pages} {012605} (\bibinfo {year} {2021})}\BibitemShut {NoStop}%
\bibitem [{\citenamefont {Zhang}\ \emph {et~al.}(2019)\citenamefont {Zhang}, \citenamefont {Zhang}, \citenamefont {Hu}, \citenamefont {Liu}, \citenamefont {Huang}, \citenamefont {Li},\ and\ \citenamefont {Guo}}]{Zhang_Zhang_Hu_Liu_Huang_Li_Guo_2019}%
  \BibitemOpen
  \bibfield  {author} {\bibinfo {author} {\bibfnamefont {H.}~\bibnamefont {Zhang}}, \bibinfo {author} {\bibfnamefont {C.}~\bibnamefont {Zhang}}, \bibinfo {author} {\bibfnamefont {X.-M.}\ \bibnamefont {Hu}}, \bibinfo {author} {\bibfnamefont {B.-H.}\ \bibnamefont {Liu}}, \bibinfo {author} {\bibfnamefont {Y.-F.}\ \bibnamefont {Huang}}, \bibinfo {author} {\bibfnamefont {C.-F.}\ \bibnamefont {Li}},\ and\ \bibinfo {author} {\bibfnamefont {G.-C.}\ \bibnamefont {Guo}},\ }\href {https://doi.org/10.1103/PhysRevA.99.052301} {\bibfield  {journal} {\bibinfo  {journal} {Phys. Rev. A}\ }\textbf {\bibinfo {volume} {99}},\ \bibinfo {pages} {052301} (\bibinfo {year} {2019})}\BibitemShut {NoStop}%
\bibitem [{\citenamefont {Senko}\ \emph {et~al.}(2015)\citenamefont {Senko}, \citenamefont {Richerme}, \citenamefont {Smith}, \citenamefont {Lee}, \citenamefont {Cohen}, \citenamefont {Retzker},\ and\ \citenamefont {Monroe}}]{Senko_Richerme_Smith_Lee_Cohen_Retzker_Monroe_2015}%
  \BibitemOpen
  \bibfield  {author} {\bibinfo {author} {\bibfnamefont {C.}~\bibnamefont {Senko}}, \bibinfo {author} {\bibfnamefont {P.}~\bibnamefont {Richerme}}, \bibinfo {author} {\bibfnamefont {J.}~\bibnamefont {Smith}}, \bibinfo {author} {\bibfnamefont {A.}~\bibnamefont {Lee}}, \bibinfo {author} {\bibfnamefont {I.}~\bibnamefont {Cohen}}, \bibinfo {author} {\bibfnamefont {A.}~\bibnamefont {Retzker}},\ and\ \bibinfo {author} {\bibfnamefont {C.}~\bibnamefont {Monroe}},\ }\href {https://doi.org/10.1103/PhysRevX.5.021026} {\bibfield  {journal} {\bibinfo  {journal} {Phys. Rev. X}\ }\textbf {\bibinfo {volume} {5}},\ \bibinfo {pages} {021026} (\bibinfo {year} {2015})}\BibitemShut {NoStop}%
\bibitem [{\citenamefont {Blok}\ \emph {et~al.}(2021)\citenamefont {Blok}, \citenamefont {Ramasesh}, \citenamefont {Schuster}, \citenamefont {O’Brien}, \citenamefont {Kreikebaum}, \citenamefont {Dahlen}, \citenamefont {Morvan}, \citenamefont {Yoshida}, \citenamefont {Yao},\ and\ \citenamefont {Siddiqi}}]{Blok2021}%
  \BibitemOpen
  \bibfield  {author} {\bibinfo {author} {\bibfnamefont {M.}~\bibnamefont {Blok}}, \bibinfo {author} {\bibfnamefont {V.}~\bibnamefont {Ramasesh}}, \bibinfo {author} {\bibfnamefont {T.}~\bibnamefont {Schuster}}, \bibinfo {author} {\bibfnamefont {K.}~\bibnamefont {O’Brien}}, \bibinfo {author} {\bibfnamefont {J.}~\bibnamefont {Kreikebaum}}, \bibinfo {author} {\bibfnamefont {D.}~\bibnamefont {Dahlen}}, \bibinfo {author} {\bibfnamefont {A.}~\bibnamefont {Morvan}}, \bibinfo {author} {\bibfnamefont {B.}~\bibnamefont {Yoshida}}, \bibinfo {author} {\bibfnamefont {N.}~\bibnamefont {Yao}},\ and\ \bibinfo {author} {\bibfnamefont {I.}~\bibnamefont {Siddiqi}},\ }\href {https://doi.org/10.1103/PhysRevX.11.021010} {\bibfield  {journal} {\bibinfo  {journal} {Phys. Rev. X}\ }\textbf {\bibinfo {volume} {11}},\ \bibinfo {pages} {021010} (\bibinfo {year} {2021})}\BibitemShut {NoStop}%
\bibitem [{\citenamefont {Chi}\ \emph {et~al.}(2022)\citenamefont {Chi}, \citenamefont {Huang}, \citenamefont {Zhang}, \citenamefont {Mao}, \citenamefont {Zhou}, \citenamefont {Chen}, \citenamefont {Zhai}, \citenamefont {Bao}, \citenamefont {Dai}, \citenamefont {Yuan}, \citenamefont {Zhang}, \citenamefont {Dai}, \citenamefont {Tang}, \citenamefont {Yang}, \citenamefont {Li}, \citenamefont {Ding}, \citenamefont {Oxenl{\o}we}, \citenamefont {Thompson}, \citenamefont {O\'Brien}, \citenamefont {Li}, \citenamefont {Gong},\ and\ \citenamefont {Wang}}]{Chi2022}%
  \BibitemOpen
  \bibfield  {author} {\bibinfo {author} {\bibfnamefont {Y.}~\bibnamefont {Chi}}, \bibinfo {author} {\bibfnamefont {J.}~\bibnamefont {Huang}}, \bibinfo {author} {\bibfnamefont {Z.}~\bibnamefont {Zhang}}, \bibinfo {author} {\bibfnamefont {J.}~\bibnamefont {Mao}}, \bibinfo {author} {\bibfnamefont {Z.}~\bibnamefont {Zhou}}, \bibinfo {author} {\bibfnamefont {X.}~\bibnamefont {Chen}}, \bibinfo {author} {\bibfnamefont {C.}~\bibnamefont {Zhai}}, \bibinfo {author} {\bibfnamefont {J.}~\bibnamefont {Bao}}, \bibinfo {author} {\bibfnamefont {T.}~\bibnamefont {Dai}}, \bibinfo {author} {\bibfnamefont {H.}~\bibnamefont {Yuan}}, \bibinfo {author} {\bibfnamefont {M.}~\bibnamefont {Zhang}}, \bibinfo {author} {\bibfnamefont {D.}~\bibnamefont {Dai}}, \bibinfo {author} {\bibfnamefont {B.}~\bibnamefont {Tang}}, \bibinfo {author} {\bibfnamefont {Y.}~\bibnamefont {Yang}}, \bibinfo {author} {\bibfnamefont {Z.}~\bibnamefont {Li}}, \bibinfo {author} {\bibfnamefont {Y.}~\bibnamefont {Ding}}, \bibinfo {author} {\bibfnamefont {L.~K.}\ \bibnamefont {Oxenl{\o}we}}, \bibinfo {author} {\bibfnamefont {M.~G.}\ \bibnamefont {Thompson}}, \bibinfo {author} {\bibfnamefont {J.~L.}\ \bibnamefont {O\'Brien}}, \bibinfo {author} {\bibfnamefont {Y.}~\bibnamefont {Li}}, \bibinfo {author} {\bibfnamefont {Q.}~\bibnamefont {Gong}},\ and\ \bibinfo {author} {\bibfnamefont {J.}~\bibnamefont {Wang}},\ }\href {https://doi.org/10.1038/s41467-022-28767-x} {\bibfield  {journal} {\bibinfo  {journal} {Nat. Commun.}\ }\textbf {\bibinfo {volume} {13}},\ \bibinfo {pages} {1166} (\bibinfo {year} {2022})}\BibitemShut {NoStop}%
\bibitem [{\citenamefont {Bechmann-Pasquinucci}\ and\ \citenamefont {Peres}(2000)}]{Bechmann-Pasquinucci_Peres_2000}%
  \BibitemOpen
  \bibfield  {author} {\bibinfo {author} {\bibfnamefont {H.}~\bibnamefont {Bechmann-Pasquinucci}}\ and\ \bibinfo {author} {\bibfnamefont {A.}~\bibnamefont {Peres}},\ }\href {https://doi.org/10.1103/PhysRevLett.85.3313} {\bibfield  {journal} {\bibinfo  {journal} {Phys. Rev. Lett.}\ }\textbf {\bibinfo {volume} {85}},\ \bibinfo {pages} {3313–3316} (\bibinfo {year} {2000})}\BibitemShut {NoStop}%
\bibitem [{\citenamefont {Wood}\ and\ \citenamefont {Gambetta}(2018)}]{Wood2018}%
  \BibitemOpen
  \bibfield  {author} {\bibinfo {author} {\bibfnamefont {C.~J.}\ \bibnamefont {Wood}}\ and\ \bibinfo {author} {\bibfnamefont {J.~M.}\ \bibnamefont {Gambetta}},\ }\href {https://doi.org/10.1103/PhysRevA.97.032306} {\bibfield  {journal} {\bibinfo  {journal} {Phys. Rev. A}\ }\textbf {\bibinfo {volume} {97}},\ \bibinfo {pages} {032306} (\bibinfo {year} {2018})}\BibitemShut {NoStop}%
\bibitem [{\citenamefont {Delfosse}\ \emph {et~al.}(2017)\citenamefont {Delfosse}, \citenamefont {Okay}, \citenamefont {Bermejo-Vega}, \citenamefont {Browne},\ and\ \citenamefont {Raussendorf}}]{Delfosse_2017}%
  \BibitemOpen
  \bibfield  {author} {\bibinfo {author} {\bibfnamefont {N.}~\bibnamefont {Delfosse}}, \bibinfo {author} {\bibfnamefont {C.}~\bibnamefont {Okay}}, \bibinfo {author} {\bibfnamefont {J.}~\bibnamefont {Bermejo-Vega}}, \bibinfo {author} {\bibfnamefont {D.~E.}\ \bibnamefont {Browne}},\ and\ \bibinfo {author} {\bibfnamefont {R.}~\bibnamefont {Raussendorf}},\ }\href {https://doi.org/10.1088/1367-2630/aa8fe3} {\bibfield  {journal} {\bibinfo  {journal} {New J. Phys.}\ }\textbf {\bibinfo {volume} {19}},\ \bibinfo {pages} {123024} (\bibinfo {year} {2017})}\BibitemShut {NoStop}%
\bibitem [{\citenamefont {Howard}\ \emph {et~al.}(2014)\citenamefont {Howard}, \citenamefont {Wallman}, \citenamefont {Veitch},\ and\ \citenamefont {Emerson}}]{Howard2014}%
  \BibitemOpen
  \bibfield  {author} {\bibinfo {author} {\bibfnamefont {M.}~\bibnamefont {Howard}}, \bibinfo {author} {\bibfnamefont {J.}~\bibnamefont {Wallman}}, \bibinfo {author} {\bibfnamefont {V.}~\bibnamefont {Veitch}},\ and\ \bibinfo {author} {\bibfnamefont {J.}~\bibnamefont {Emerson}},\ }\href {https://doi.org/10.1038/nature13460} {\bibfield  {journal} {\bibinfo  {journal} {Nature}\ }\textbf {\bibinfo {volume} {510}},\ \bibinfo {pages} {351–355} (\bibinfo {year} {2014})}\BibitemShut {NoStop}%
\bibitem [{\citenamefont {Pavi\v{c}i\'c}(2023)}]{Pavicic2023quantum}%
  \BibitemOpen
  \bibfield  {author} {\bibinfo {author} {\bibfnamefont {M.}~\bibnamefont {Pavi\v{c}i\'c}},\ }\href {https://doi.org/10.22331/q-2023-03-17-953} {\bibfield  {journal} {\bibinfo  {journal} {{Quantum}}\ }\textbf {\bibinfo {volume} {7}},\ \bibinfo {pages} {953} (\bibinfo {year} {2023})}\BibitemShut {NoStop}%
\bibitem [{\citenamefont {Knill}\ \emph {et~al.}(2008)\citenamefont {Knill}, \citenamefont {Leibfried}, \citenamefont {Reichle}, \citenamefont {Britton}, \citenamefont {Blakestad}, \citenamefont {Jost}, \citenamefont {Langer}, \citenamefont {Ozeri}, \citenamefont {Seidelin},\ and\ \citenamefont {Wineland}}]{Knill2008}%
  \BibitemOpen
  \bibfield  {author} {\bibinfo {author} {\bibfnamefont {E.}~\bibnamefont {Knill}}, \bibinfo {author} {\bibfnamefont {D.}~\bibnamefont {Leibfried}}, \bibinfo {author} {\bibfnamefont {R.}~\bibnamefont {Reichle}}, \bibinfo {author} {\bibfnamefont {J.}~\bibnamefont {Britton}}, \bibinfo {author} {\bibfnamefont {R.~B.}\ \bibnamefont {Blakestad}}, \bibinfo {author} {\bibfnamefont {J.~D.}\ \bibnamefont {Jost}}, \bibinfo {author} {\bibfnamefont {C.}~\bibnamefont {Langer}}, \bibinfo {author} {\bibfnamefont {R.}~\bibnamefont {Ozeri}}, \bibinfo {author} {\bibfnamefont {S.}~\bibnamefont {Seidelin}},\ and\ \bibinfo {author} {\bibfnamefont {D.~J.}\ \bibnamefont {Wineland}},\ }\href {https://doi.org/10.1103/PhysRevA.77.012307} {\bibfield  {journal} {\bibinfo  {journal} {Phys. Rev. A}\ }\textbf {\bibinfo {volume} {77}},\ \bibinfo {pages} {012307} (\bibinfo {year} {2008})}\BibitemShut {NoStop}%
\bibitem [{\citenamefont {Carignan-Dugas}\ \emph {et~al.}(2015)\citenamefont {Carignan-Dugas}, \citenamefont {Wallman},\ and\ \citenamefont {Emerson}}]{dugas2015}%
  \BibitemOpen
  \bibfield  {author} {\bibinfo {author} {\bibfnamefont {A.}~\bibnamefont {Carignan-Dugas}}, \bibinfo {author} {\bibfnamefont {J.~J.}\ \bibnamefont {Wallman}},\ and\ \bibinfo {author} {\bibfnamefont {J.}~\bibnamefont {Emerson}},\ }\href {https://doi.org/10.1103/PhysRevA.92.060302} {\bibfield  {journal} {\bibinfo  {journal} {Phys. Rev. A}\ }\textbf {\bibinfo {volume} {92}},\ \bibinfo {pages} {060302} (\bibinfo {year} {2015})}\BibitemShut {NoStop}%
\bibitem [{\citenamefont {Barends}\ \emph {et~al.}(2014)\citenamefont {Barends}, \citenamefont {Kelly}, \citenamefont {Veitia}, \citenamefont {Megrant}, \citenamefont {Fowler}, \citenamefont {Campbell}, \citenamefont {Chen}, \citenamefont {Chen}, \citenamefont {Chiaro}, \citenamefont {Dunsworth}, \citenamefont {Hoi}, \citenamefont {Jeffrey}, \citenamefont {Neill}, \citenamefont {O'Malley}, \citenamefont {Mutus}, \citenamefont {Quintana}, \citenamefont {Roushan}, \citenamefont {Sank}, \citenamefont {Wenner}, \citenamefont {White}, \citenamefont {Korotkov}, \citenamefont {Cleland},\ and\ \citenamefont {Martinis}}]{barends2014}%
  \BibitemOpen
  \bibfield  {author} {\bibinfo {author} {\bibfnamefont {R.}~\bibnamefont {Barends}}, \bibinfo {author} {\bibfnamefont {J.}~\bibnamefont {Kelly}}, \bibinfo {author} {\bibfnamefont {A.}~\bibnamefont {Veitia}}, \bibinfo {author} {\bibfnamefont {A.}~\bibnamefont {Megrant}}, \bibinfo {author} {\bibfnamefont {A.~G.}\ \bibnamefont {Fowler}}, \bibinfo {author} {\bibfnamefont {B.}~\bibnamefont {Campbell}}, \bibinfo {author} {\bibfnamefont {Y.}~\bibnamefont {Chen}}, \bibinfo {author} {\bibfnamefont {Z.}~\bibnamefont {Chen}}, \bibinfo {author} {\bibfnamefont {B.}~\bibnamefont {Chiaro}}, \bibinfo {author} {\bibfnamefont {A.}~\bibnamefont {Dunsworth}}, \bibinfo {author} {\bibfnamefont {I.-C.}\ \bibnamefont {Hoi}}, \bibinfo {author} {\bibfnamefont {E.}~\bibnamefont {Jeffrey}}, \bibinfo {author} {\bibfnamefont {C.}~\bibnamefont {Neill}}, \bibinfo {author} {\bibfnamefont {P.~J.~J.}\ \bibnamefont {O'Malley}}, \bibinfo {author} {\bibfnamefont {J.}~\bibnamefont {Mutus}}, \bibinfo {author} {\bibfnamefont {C.}~\bibnamefont {Quintana}}, \bibinfo {author} {\bibfnamefont {P.}~\bibnamefont {Roushan}}, \bibinfo {author} {\bibfnamefont {D.}~\bibnamefont {Sank}}, \bibinfo {author} {\bibfnamefont {J.}~\bibnamefont {Wenner}}, \bibinfo {author} {\bibfnamefont {T.~C.}\ \bibnamefont {White}}, \bibinfo {author} {\bibfnamefont {A.~N.}\ \bibnamefont {Korotkov}}, \bibinfo {author} {\bibfnamefont {A.~N.}\ \bibnamefont {Cleland}},\ and\ \bibinfo {author} {\bibfnamefont {J.~M.}\ \bibnamefont {Martinis}},\ }\href {https://doi.org/10.1103/PhysRevA.90.030303} {\bibfield  {journal} {\bibinfo  {journal} {Phys. Rev. A}\ }\textbf {\bibinfo {volume} {90}},\ \bibinfo {pages} {030303} (\bibinfo {year} {2014})}\BibitemShut {NoStop}%
\bibitem [{\citenamefont {Chen}\ \emph {et~al.}(2022)\citenamefont {Chen}, \citenamefont {Ding},\ and\ \citenamefont {Huang}}]{chen2022}%
  \BibitemOpen
  \bibfield  {author} {\bibinfo {author} {\bibfnamefont {J.}~\bibnamefont {Chen}}, \bibinfo {author} {\bibfnamefont {D.}~\bibnamefont {Ding}},\ and\ \bibinfo {author} {\bibfnamefont {C.}~\bibnamefont {Huang}},\ }\href {https://doi.org/10.1103/PRXQuantum.3.030320} {\bibfield  {journal} {\bibinfo  {journal} {PRX Quantum}\ }\textbf {\bibinfo {volume} {3}},\ \bibinfo {pages} {030320} (\bibinfo {year} {2022})}\BibitemShut {NoStop}%
\bibitem [{\citenamefont {Helsen}\ \emph {et~al.}(2022)\citenamefont {Helsen}, \citenamefont {Roth}, \citenamefont {Onorati}, \citenamefont {Werner},\ and\ \citenamefont {Eisert}}]{helsen22prx}%
  \BibitemOpen
  \bibfield  {author} {\bibinfo {author} {\bibfnamefont {J.}~\bibnamefont {Helsen}}, \bibinfo {author} {\bibfnamefont {I.}~\bibnamefont {Roth}}, \bibinfo {author} {\bibfnamefont {E.}~\bibnamefont {Onorati}}, \bibinfo {author} {\bibfnamefont {A.}~\bibnamefont {Werner}},\ and\ \bibinfo {author} {\bibfnamefont {J.}~\bibnamefont {Eisert}},\ }\href {https://doi.org/10.1103/PRXQuantum.3.020357} {\bibfield  {journal} {\bibinfo  {journal} {PRX Quantum}\ }\textbf {\bibinfo {volume} {3}},\ \bibinfo {pages} {020357} (\bibinfo {year} {2022})}\BibitemShut {NoStop}%
\bibitem [{\citenamefont {Bennett}\ \emph {et~al.}(1996)\citenamefont {Bennett}, \citenamefont {DiVincenzo}, \citenamefont {Smolin},\ and\ \citenamefont {Wootters}}]{bennet96}%
  \BibitemOpen
  \bibfield  {author} {\bibinfo {author} {\bibfnamefont {C.~H.}\ \bibnamefont {Bennett}}, \bibinfo {author} {\bibfnamefont {D.~P.}\ \bibnamefont {DiVincenzo}}, \bibinfo {author} {\bibfnamefont {J.~A.}\ \bibnamefont {Smolin}},\ and\ \bibinfo {author} {\bibfnamefont {W.~K.}\ \bibnamefont {Wootters}},\ }\href {https://doi.org/10.1103/PhysRevA.54.3824} {\bibfield  {journal} {\bibinfo  {journal} {Phys. Rev. A}\ }\textbf {\bibinfo {volume} {54}},\ \bibinfo {pages} {3824} (\bibinfo {year} {1996})}\BibitemShut {NoStop}%
\bibitem [{\citenamefont {Michael}(1992)}]{Tinkham92}%
  \BibitemOpen
  \bibfield  {author} {\bibinfo {author} {\bibfnamefont {T.}~\bibnamefont {Michael}},\ }\href@noop {} {\emph {\bibinfo {title} {Group Theory and Quantum Mechanics}}}\ (\bibinfo  {publisher} {Dover Publications},\ \bibinfo {address} {New York},\ \bibinfo {year} {1992})\BibitemShut {NoStop}%
\bibitem [{\citenamefont {Moretti}(2017)}]{moretti2017}%
  \BibitemOpen
  \bibfield  {author} {\bibinfo {author} {\bibfnamefont {V.}~\bibnamefont {Moretti}},\ }\href@noop {} {\emph {\bibinfo {title} {Spectral Theory and Quantum Mechanics}}},\ \bibinfo {edition} {2nd}\ ed.\ (\bibinfo  {publisher} {Springer International Publishing},\ \bibinfo {address} {Cham, Switzerland},\ \bibinfo {year} {2017})\BibitemShut {NoStop}%
\bibitem [{\citenamefont {Choi}(1975)}]{choi1975}%
  \BibitemOpen
  \bibfield  {author} {\bibinfo {author} {\bibfnamefont {M.-D.}\ \bibnamefont {Choi}},\ }\href {https://doi.org/https://doi.org/10.1016/0024-3795(75)90075-0} {\bibfield  {journal} {\bibinfo  {journal} {Linear Algebra Appl.}\ }\textbf {\bibinfo {volume} {10}},\ \bibinfo {pages} {285} (\bibinfo {year} {1975})}\BibitemShut {NoStop}%
\bibitem [{\citenamefont {Jamiołkowski}(1972)}]{jamiolkowski1972}%
  \BibitemOpen
  \bibfield  {author} {\bibinfo {author} {\bibfnamefont {A.}~\bibnamefont {Jamiołkowski}},\ }\href {https://doi.org/https://doi.org/10.1016/0034-4877(72)90011-0} {\bibfield  {journal} {\bibinfo  {journal} {Rep. Math. Phys.}\ }\textbf {\bibinfo {volume} {3}},\ \bibinfo {pages} {275} (\bibinfo {year} {1972})}\BibitemShut {NoStop}%
\bibitem [{\citenamefont {Cui}\ and\ \citenamefont {Wang}(2015)}]{Cui_Wang_2015}%
  \BibitemOpen
  \bibfield  {author} {\bibinfo {author} {\bibfnamefont {S.~X.}\ \bibnamefont {Cui}}\ and\ \bibinfo {author} {\bibfnamefont {Z.}~\bibnamefont {Wang}},\ }\href@noop {} {\bibfield  {journal} {\bibinfo  {journal} {J. Math. Phys.}\ }\textbf {\bibinfo {volume} {56}},\ \bibinfo {pages} {032202} (\bibinfo {year} {2015})}\BibitemShut {NoStop}%
\bibitem [{\citenamefont {Kitaev}(1997)}]{Kitaev_1997}%
  \BibitemOpen
  \bibfield  {author} {\bibinfo {author} {\bibfnamefont {A.~Y.}\ \bibnamefont {Kitaev}},\ }\href {https://doi.org/10.1070/RM1997v052n06ABEH002155} {\bibfield  {journal} {\bibinfo  {journal} {Russ. Math. Surv.}\ }\textbf {\bibinfo {volume} {52}},\ \bibinfo {pages} {1191–1249} (\bibinfo {year} {1997})}\BibitemShut {NoStop}%
\bibitem [{\citenamefont {Watson}\ \emph {et~al.}(2015)\citenamefont {Watson}, \citenamefont {Campbell}, \citenamefont {Anwar},\ and\ \citenamefont {Browne}}]{Watson_Campbell_Anwar_Browne_2015}%
  \BibitemOpen
  \bibfield  {author} {\bibinfo {author} {\bibfnamefont {F.~H.~E.}\ \bibnamefont {Watson}}, \bibinfo {author} {\bibfnamefont {E.~T.}\ \bibnamefont {Campbell}}, \bibinfo {author} {\bibfnamefont {H.}~\bibnamefont {Anwar}},\ and\ \bibinfo {author} {\bibfnamefont {D.~E.}\ \bibnamefont {Browne}},\ }\href {https://doi.org/10.1103/PhysRevA.92.022312} {\bibfield  {journal} {\bibinfo  {journal} {Phys. Rev. A}\ }\textbf {\bibinfo {volume} {92}},\ \bibinfo {pages} {022312} (\bibinfo {year} {2015})}\BibitemShut {NoStop}%
\bibitem [{\citenamefont {Brylinski}\ and\ \citenamefont {Chen}(2002)}]{brylinski2002universal}%
  \BibitemOpen
  \bibinfo {editor} {\bibfnamefont {R.~K.}\ \bibnamefont {Brylinski}}\ and\ \bibinfo {editor} {\bibfnamefont {G.}~\bibnamefont {Chen}},\ eds.,\ \href@noop {} {\emph {\bibinfo {title} {Mathematics of {Q}uantum {C}omputation}}},\ Computational Mathematics\ (\bibinfo  {publisher} {Chapman \& Hall/CRC},\ \bibinfo {address} {Philadelphia, PA},\ \bibinfo {year} {2002})\ pp.\ \bibinfo {pages} {117--134}\BibitemShut {NoStop}%
\bibitem [{\citenamefont {Patera}\ and\ \citenamefont {Zassenhaus}(1988)}]{Patera_Zassenhaus_1988}%
  \BibitemOpen
  \bibfield  {author} {\bibinfo {author} {\bibfnamefont {J.}~\bibnamefont {Patera}}\ and\ \bibinfo {author} {\bibfnamefont {H.}~\bibnamefont {Zassenhaus}},\ }\href {https://doi.org/10.1063/1.528006} {\bibfield  {journal} {\bibinfo  {journal} {J. Math. Phys.}\ }\textbf {\bibinfo {volume} {29}},\ \bibinfo {pages} {665} (\bibinfo {year} {1988})}\BibitemShut {NoStop}%
\bibitem [{\citenamefont {Baker}(1909)}]{sylvester}%
  \BibitemOpen
  \bibfield  {author} {\bibinfo {author} {\bibfnamefont {H.~F.}\ \bibnamefont {Baker}},\ }\href@noop {} {\emph {\bibinfo {title} {{T}he {C}ollected {M}athematical {P}apers of {J}ames {J}oseph {S}ylvester}}},\ Vol.~\bibinfo {volume} {3}\ (\bibinfo  {publisher} {Cambridge University Press},\ \bibinfo {address} {Cambridge},\ \bibinfo {year} {1909})\BibitemShut {NoStop}%
\bibitem [{\citenamefont {Schwinger}(1960)}]{schwinger}%
  \BibitemOpen
  \bibfield  {author} {\bibinfo {author} {\bibfnamefont {J.}~\bibnamefont {Schwinger}},\ }\href@noop {} {\bibfield  {journal} {\bibinfo  {journal} {Proc. Natl. Acad. Sci. U.S.A.}\ }\textbf {\bibinfo {volume} {46}},\ \bibinfo {pages} {570} (\bibinfo {year} {1960})}\BibitemShut {NoStop}%
\bibitem [{\citenamefont {Serre}(1977)}]{serre1977}%
  \BibitemOpen
  \bibfield  {author} {\bibinfo {author} {\bibfnamefont {J.-P.}\ \bibnamefont {Serre}},\ }\href@noop {} {\emph {\bibinfo {title} {Linear {R}epresentations of {F}inite {G}roups}}},\ Graduate Texts in Mathematics\ (\bibinfo  {publisher} {Springer-Verlag},\ \bibinfo {address} {New York},\ \bibinfo {year} {1977})\BibitemShut {NoStop}%
\bibitem [{\citenamefont {Altmann}(1978)}]{altmann77}%
  \BibitemOpen
  \bibfield  {author} {\bibinfo {author} {\bibfnamefont {S.~L.}\ \bibnamefont {Altmann}},\ }\href@noop {} {\emph {\bibinfo {title} {Induced {R}epresentations in {C}rystals and {M}olecules: {P}oint, {S}pace and {N}onrigid {M}olecule {G}roups}}}\ (\bibinfo  {publisher} {Academic Press},\ \bibinfo {address} {London},\ \bibinfo {year} {1978})\BibitemShut {NoStop}%
\bibitem [{\citenamefont {França}\ and\ \citenamefont {Hashagen}(2018)}]{Franca2018}%
  \BibitemOpen
  \bibfield  {author} {\bibinfo {author} {\bibfnamefont {D.~S.}\ \bibnamefont {França}}\ and\ \bibinfo {author} {\bibfnamefont {A.~K.}\ \bibnamefont {Hashagen}},\ }\href {https://doi.org/10.1088/1751-8121/aad6fa} {\bibfield  {journal} {\bibinfo  {journal} {J. Phys. A: Math. Theor.}\ }\textbf {\bibinfo {volume} {51}},\ \bibinfo {pages} {395302} (\bibinfo {year} {2018})}\BibitemShut {NoStop}%
\bibitem [{\citenamefont {Campbell}(2014)}]{Campbell2014}%
  \BibitemOpen
  \bibfield  {author} {\bibinfo {author} {\bibfnamefont {E.~T.}\ \bibnamefont {Campbell}},\ }\href {https://doi.org/10.1103/PhysRevLett.113.230501} {\bibfield  {journal} {\bibinfo  {journal} {Phys. Rev. Lett.}\ }\textbf {\bibinfo {volume} {113}},\ \bibinfo {pages} {230501} (\bibinfo {year} {2014})}\BibitemShut {NoStop}%
\bibitem [{\citenamefont {Merkel}\ \emph {et~al.}(2021)\citenamefont {Merkel}, \citenamefont {Pritchett},\ and\ \citenamefont {Fong}}]{Merkel2021}%
  \BibitemOpen
  \bibfield  {author} {\bibinfo {author} {\bibfnamefont {S.~T.}\ \bibnamefont {Merkel}}, \bibinfo {author} {\bibfnamefont {E.~J.}\ \bibnamefont {Pritchett}},\ and\ \bibinfo {author} {\bibfnamefont {B.~H.}\ \bibnamefont {Fong}},\ }\href {https://doi.org/10.22331/q-2021-11-16-581} {\bibfield  {journal} {\bibinfo  {journal} {Quantum}\ }\textbf {\bibinfo {volume} {5}},\ \bibinfo {pages} {581} (\bibinfo {year} {2021})}\BibitemShut {NoStop}%
\bibitem [{\citenamefont {Wallman}(2018)}]{Wallman_2018}%
  \BibitemOpen
  \bibfield  {author} {\bibinfo {author} {\bibfnamefont {J.~J.}\ \bibnamefont {Wallman}},\ }\href {https://doi.org/10.22331/q-2018-01-29-47} {\bibfield  {journal} {\bibinfo  {journal} {Quantum}\ }\textbf {\bibinfo {volume} {2}},\ \bibinfo {pages} {47} (\bibinfo {year} {2018})}\BibitemShut {NoStop}%
\bibitem [{\citenamefont {Nielsen}(2002)}]{NIELSEN2002249}%
  \BibitemOpen
  \bibfield  {author} {\bibinfo {author} {\bibfnamefont {M.~A.}\ \bibnamefont {Nielsen}},\ }\href {https://doi.org/https://doi.org/10.1016/S0375-9601(02)01272-0} {\bibfield  {journal} {\bibinfo  {journal} {Phys. Lett. A}\ }\textbf {\bibinfo {volume} {303}},\ \bibinfo {pages} {249} (\bibinfo {year} {2002})}\BibitemShut {NoStop}%
\bibitem [{\citenamefont {Gambetta}\ \emph {et~al.}(2012)\citenamefont {Gambetta}, \citenamefont {Córcoles}, \citenamefont {Merkel}, \citenamefont {Johnson}, \citenamefont {Smolin}, \citenamefont {Chow}, \citenamefont {Ryan}, \citenamefont {Rigetti}, \citenamefont {Poletto}, \citenamefont {Ohki},\ and\ \citenamefont {et~al.}}]{gambetta2012}%
  \BibitemOpen
  \bibfield  {author} {\bibinfo {author} {\bibfnamefont {J.~M.}\ \bibnamefont {Gambetta}}, \bibinfo {author} {\bibfnamefont {A.~D.}\ \bibnamefont {Córcoles}}, \bibinfo {author} {\bibfnamefont {S.~T.}\ \bibnamefont {Merkel}}, \bibinfo {author} {\bibfnamefont {B.~R.}\ \bibnamefont {Johnson}}, \bibinfo {author} {\bibfnamefont {J.~A.}\ \bibnamefont {Smolin}}, \bibinfo {author} {\bibfnamefont {J.~M.}\ \bibnamefont {Chow}}, \bibinfo {author} {\bibfnamefont {C.~A.}\ \bibnamefont {Ryan}}, \bibinfo {author} {\bibfnamefont {C.}~\bibnamefont {Rigetti}}, \bibinfo {author} {\bibfnamefont {S.}~\bibnamefont {Poletto}}, \bibinfo {author} {\bibfnamefont {T.~A.}\ \bibnamefont {Ohki}},\ and\ \bibinfo {author} {\bibnamefont {et~al.}},\ }\href {https://doi.org/10.1103/PhysRevLett.109.240504} {\bibfield  {journal} {\bibinfo  {journal} {Phys. Rev. Lett.}\ }\textbf {\bibinfo {volume} {109}},\ \bibinfo {pages} {240504} (\bibinfo {year} {2012})}\BibitemShut {NoStop}%
\bibitem [{\citenamefont {Helsen}\ \emph {et~al.}(2019{\natexlab{a}})\citenamefont {Helsen}, \citenamefont {Xue}, \citenamefont {Vandersypen},\ and\ \citenamefont {Wehner}}]{Helsen_Xue_Vandersypen_Wehner_2019}%
  \BibitemOpen
  \bibfield  {author} {\bibinfo {author} {\bibfnamefont {J.}~\bibnamefont {Helsen}}, \bibinfo {author} {\bibfnamefont {X.}~\bibnamefont {Xue}}, \bibinfo {author} {\bibfnamefont {L.~M.~K.}\ \bibnamefont {Vandersypen}},\ and\ \bibinfo {author} {\bibfnamefont {S.}~\bibnamefont {Wehner}},\ }\href {https://doi.org/10.1038/s41534-019-0182-7} {\bibfield  {journal} {\bibinfo  {journal} {npj Quantum Inf.}\ }\textbf {\bibinfo {volume} {5}},\ \bibinfo {pages} {1–9} (\bibinfo {year} {2019}{\natexlab{a}})}\BibitemShut {NoStop}%
\bibitem [{\citenamefont {Claes}\ \emph {et~al.}(2021)\citenamefont {Claes}, \citenamefont {Rieffel},\ and\ \citenamefont {Wang}}]{claes2021}%
  \BibitemOpen
  \bibfield  {author} {\bibinfo {author} {\bibfnamefont {J.}~\bibnamefont {Claes}}, \bibinfo {author} {\bibfnamefont {E.}~\bibnamefont {Rieffel}},\ and\ \bibinfo {author} {\bibfnamefont {Z.}~\bibnamefont {Wang}},\ }\href {https://doi.org/10.1103/PRXQuantum.2.010351} {\bibfield  {journal} {\bibinfo  {journal} {PRX Quantum}\ }\textbf {\bibinfo {volume} {2}},\ \bibinfo {pages} {010351} (\bibinfo {year} {2021})}\BibitemShut {NoStop}%
\bibitem [{\citenamefont {de~Guise}\ \emph {et~al.}(2018)\citenamefont {de~Guise}, \citenamefont {Di~Matteo},\ and\ \citenamefont {S\'anchez-Soto}}]{deguise2018}%
  \BibitemOpen
  \bibfield  {author} {\bibinfo {author} {\bibfnamefont {H.}~\bibnamefont {de~Guise}}, \bibinfo {author} {\bibfnamefont {O.}~\bibnamefont {Di~Matteo}},\ and\ \bibinfo {author} {\bibfnamefont {L.~L.}\ \bibnamefont {S\'anchez-Soto}},\ }\href {https://doi.org/10.1103/PhysRevA.97.022328} {\bibfield  {journal} {\bibinfo  {journal} {Phys. Rev. A}\ }\textbf {\bibinfo {volume} {97}},\ \bibinfo {pages} {022328} (\bibinfo {year} {2018})}\BibitemShut {NoStop}%
\bibitem [{\citenamefont {Helsen}\ \emph {et~al.}(2019{\natexlab{b}})\citenamefont {Helsen}, \citenamefont {Wallman}, \citenamefont {Flammia},\ and\ \citenamefont {Wehner}}]{Helsen2019}%
  \BibitemOpen
  \bibfield  {author} {\bibinfo {author} {\bibfnamefont {J.}~\bibnamefont {Helsen}}, \bibinfo {author} {\bibfnamefont {J.~J.}\ \bibnamefont {Wallman}}, \bibinfo {author} {\bibfnamefont {S.~T.}\ \bibnamefont {Flammia}},\ and\ \bibinfo {author} {\bibfnamefont {S.}~\bibnamefont {Wehner}},\ }\href {https://doi.org/10.1103/PhysRevA.100.032304} {\bibfield  {journal} {\bibinfo  {journal} {Phys. Rev. A}\ }\textbf {\bibinfo {volume} {100}},\ \bibinfo {pages} {032304} (\bibinfo {year} {2019}{\natexlab{b}})}\BibitemShut {NoStop}%
\bibitem [{\citenamefont {Itoko}\ and\ \citenamefont {Raymond}(2021)}]{Itoko_Raymond_2021}%
  \BibitemOpen
  \bibfield  {author} {\bibinfo {author} {\bibfnamefont {T.}~\bibnamefont {Itoko}}\ and\ \bibinfo {author} {\bibfnamefont {R.}~\bibnamefont {Raymond}},\ }in\ \href {https://doi.org/10.1109/QCE52317.2021.00036} {\emph {\bibinfo {booktitle} {2021 IEEE International Conference on Quantum Computing and Engineering (QCE)}}}\ (\bibinfo {year} {2021})\ p.\ \bibinfo {pages} {188–198}\BibitemShut {NoStop}%
\bibitem [{\citenamefont {Amaro-Alcal\'a}\ \emph {et~al.}(2024)\citenamefont {Amaro-Alcal\'a}, \citenamefont {Sanders},\ and\ \citenamefont {de~Guise}}]{amaro2024}%
  \BibitemOpen
  \bibfield  {author} {\bibinfo {author} {\bibfnamefont {D.}~\bibnamefont {Amaro-Alcal\'a}}, \bibinfo {author} {\bibfnamefont {B.~C.}\ \bibnamefont {Sanders}},\ and\ \bibinfo {author} {\bibfnamefont {H.}~\bibnamefont {de~Guise}}} (\bibinfo {year} {2024})\BibitemShut {NoStop}%
\bibitem [{\citenamefont {Chuang}\ and\ \citenamefont {Nielsen}(1997)}]{chuangchirep}%
  \BibitemOpen
  \bibfield  {author} {\bibinfo {author} {\bibfnamefont {I.~L.}\ \bibnamefont {Chuang}}\ and\ \bibinfo {author} {\bibfnamefont {M.~A.}\ \bibnamefont {Nielsen}},\ }\href {https://doi.org/10.1080/09500349708231894} {\bibfield  {journal} {\bibinfo  {journal} {J. Mod. Opt.}\ }\textbf {\bibinfo {volume} {44}},\ \bibinfo {pages} {2455} (\bibinfo {year} {1997})}\BibitemShut {NoStop}%
\end{thebibliography}%
\appendix
\section{Effect of phases on the survival probability and average gate fidelity}\label{app:real_part}
In this section, we study the effect of the phases in
Eq.~\eqref{eq:twirl_entries_explicit} on the survival probability curve and
average gate fidelity.
We show that, for high-fidelity gates, the contribution of the phases can be neglected in the AGF.
However, for high-depth circuits, the survival probability curve deviates from a single exponential.

To consider the most general case, we express the phase in terms of the
\(\chi\)-representation.
This representation has its origins in quantum
tomography~\cite{MikeAndIke,chuangchirep}.
We use this representation to obtain the general expression for the phase in eigenvalue \(\lambda_0\) in Eq.~\eqref{eq:twirl_entries_explicit}.
In the \(\chi\)-representation, the phase in Eq.~\eqref{eq:twirl_entries_explicit} is given by
\begin{equation}
  \varphi_0 = 
2\arctan\left(\frac{v}{\sqrt{u^2 + v^2}+u}\right), 
\end{equation}
where  $u\coloneqq \Re(\lambda_0)$, and $v\coloneqq \Im(\lambda_0)$.
Specifically, we have
\begin{subequations}\label{eqs:realandimaginarypart}
\begin{align}
  u &= \chi_{00} + \chi_{11} + \chi_{22} \\&- \frac{1}{2}(\chi_{33} + \chi_{44} + \chi_{55}+\chi_{66} + \chi_{77} + \chi_{88}), \nonumber\\
  v &= \frac{2}{\sqrt{3}}\left( \chi_{33} + \chi_{44} + \chi_{55} -\chi_{66} -\chi_{77} - \chi_{88} \right).\label{eq:imaginarypart}
\end{align}
\end{subequations}

We now analyse the asymptotic behaviour of the phase for high-fidelity gates.
High-fidelity implies $\chi_{00} \lessapprox 1$ and $\chi_{ii} \ll 1$ for $i>0$. This implies that $u\approx 1$ and $v \ll 1$.
Asymptotic behaviour of \(\cos\varphi_0\) is
\begin{equation}\label{eq:app-cos-varphi}
  \cos\varphi_0 = 1-\frac{1}{2}\left(\frac{v}{u}\right)^2 + \mathcal{O}((u/v)^{4})\\ \approx 1 - \frac{1}{2}\left(\frac{1-\bar\Fidelity}{\bar\Fidelity}\right)^2.
\end{equation}
Eq.~\eqref{eq:app-cos-varphi} shows that for high-fidelity gates approximation Eq.~\eqref{eq:rbnumber_qutrit_real} is valid.

However, the experimental estimate of the eigenvalues \(\lambda_\varsigma\)
could be affected by the phase in Eq.~\eqref{eq:twirl_entries_explicit}.
We study the case when the phase \(\varphi_0\) is maximal for a given AGF.
We consider \(\bar{\Fidelity} = 0.9925\).
In Fig.~\ref{fig:phase} we show the logarithm of the SP.
From Fig.~\ref{fig:phase} we notice a deviation from a straight line; this deviation indicates the SP is not a single-exponential.
The shape of this curve makes it difficult to estimate the parameter if the number of composed gates is large.
\begin{figure}[ht]
    %\incfig{plot survival probability observing phase}
  \includegraphics{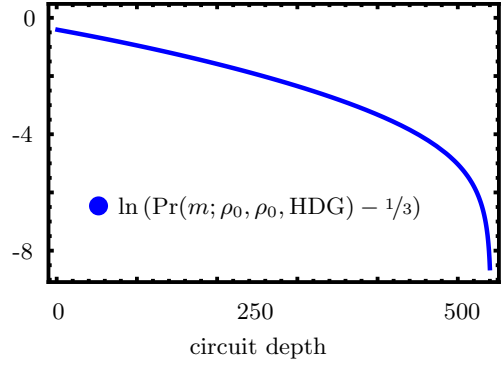}
    \caption{Log plot of the \ac{sp} \(\Pr\)
      of Eq.~\eqref{eq:survival_probability_gateindep} for a gate-independent 
    \ac{rb} run.
    The noise was fixed with \(\bar{\Fidelity} = 0.9925\).
    %The unital Pauli channel uses \(p_0=0.99\) (\(\bar{\Fidelity} = 0.9925\)) in Eq.~\eqref{eq:agf_unitalpauli_maximisingphase}
    %by using the \(\chi\)-matrix in 
    %Eq.~\eqref{eq:chi-matrix-example-phase}.
    We highlight the presence of the oscillatory contribution as a deviation from a straight line.
}
\label{fig:phase}
\end{figure}

The contribution of the phase to the SP in Eq.~\eqref{eq:survival_probability_gateindep}
is assessed by an asymptotic expansion.
For large values of the fidelity 
\begin{equation}
\cos m\varphi_0 = 1-\frac{1}{2}(mu/v)^2+\mathcal{O}((u/v)^{4})\\
\approx 1 - \frac{1}{2}(m(1-\bar\Fidelity)/\bar\Fidelity)^2.
\end{equation}
Therefore, even for high-fidelity gates, shallower circuit depths must be considered for the single-exponential fit.
Otherwise the presence of the phase produces a bad fit.

\section{Constructing and manipulating qutrit HDG elements}
\label{app:computation-hdg-elemes}

For convenience, we construct elements in the qutrit \ac{hdg} by computing products of the matrices $X$,
\(A_1 \coloneqq \diag[1 , \omega_{9}^8, \omega_{9}]\),
and
\(A_2 \coloneqq \diag[\omega_{9}^2, \omega_{9}^6,  \omega_{9}]\),
where $\omega_9 \coloneqq\exp(2\pi i/9)$.
Each \ac{hdg} member \(X^x A_1^y A_2^z\) is labelled by the \((x,y,z)\), where $x,y\in \mathbb{Z}_3$,  $z\in \mathbb{Z}_9$.
    For two words 
    \((x_1, y_1, z_1)\) and
    \((x_2, y_2, z_2)\)
    labelling \ac{hdg} elements 
\(X^{x_1} A_1^{y_1} A_2^{z_1}\) and
\(X^{x_2} A_1^{y_2} A_2^{z_2}\) respectively,
the resulting group element is
\(
    X^{x_3} A_1^{y_3} A_2^{z_3} =
    X^{x_1} A_1^{y_1} A_2^{z_1}
    X^{x_2} A_1^{y_2} A_2^{z_2}
    \) where 
\((x_3, y_3, z_3)\) is given by 
\begin{equation}
\label{eq:multiplication_rule_hdg}
\begin{bmatrix}
    x_3\\
    y_3\\
    z_3
\end{bmatrix}
\coloneqq
    \begin{bmatrix}
    1 & 0 & 0 \\
    0 & 5 & 8 \\
    0 & 4 & 3
    \end{bmatrix}^{x_2}
    \begin{bmatrix}
        x_1\\
        y_1\\
        z_1
    \end{bmatrix}
    % \\+
    +
\begin{bmatrix}
    x_2\\
    y_2\\
    z_2
\end{bmatrix}.
\end{equation}
Similarly, for an \ac{hdg} element \((x_1,y_1,z_1)\), the inverse 
word \((x_2,y_2,z_2)\) satisfying
\begin{equation}
    (X^{x_1} A_1^{y_1} A_2^{z_1}) (X^{x_2} A_1^{y_2} A_2^{z_2}) = \mathbb{I}_3
\end{equation}
is given by
\begin{equation}
\begin{bmatrix}
    x_2\\
    y_2\\
    z_2
\end{bmatrix}
\coloneqq
    -\begin{bmatrix}
    1 & 0 & 0 \\
    0 & 5 & 8 \\
    0 & 4 & 3
    \end{bmatrix}^{3-x_1}
    \begin{bmatrix}
        x_1\\
        y_1\\
        z_1
    \end{bmatrix}.
\end{equation}
The multiplication rule in Eq.~\eqref{eq:multiplication_rule_hdg} also hints at the semidirect product structure of the group,
where the $X$ gate, acting by conjugation, is an automorphism for the subgroup generated by the matrices $A_1$ and $A_2$.
\section{Criteria for the selection of the HDG and proofs} \label{app:proofs}
\subsection{Criteria}\label{subsec:criteria}
\resetexample

We now explain our reasoning for choosing the HDG.
We identify and examine four properties that a pair must satisfy for optimal
characterisation of a qutrit T gate within a RB scheme.
We then prove that the pair (\ac{hdg}, $\gamma$) is the unique pair that meets
our
four criteria.
Appropriate and optimal pairs (as defined in the next two paragraphs)
generalise the pair group-irrep used in \acl{db}.
\increase

Our four criteria are divided into two categories: 
two criteria distinguish appropriate from inappropriate pairs whilst the
remaining criteria identify optimal pairs.
We later show  that our criteria lead to the identification of a unique
appropriate and optimal pair.
\increase

We can now discuss our criteria for identifying an appropriate pair $(\pairgroup, \pairirrep)$.
\comentario{reb:criterion-irreps}
Our first criterion justifies why only irreducible, and not reducible, representations are used in RB schemes.
Neither in Clifford or \(D_8\) RB schemes, 
this point is addressed. 
The motivation, as discussed in Appendix~\ref{app:proofs},
is to prevent increasing the number of parameters in the \ac{sp} and \ac{agf}.
A pair $(\pairgroup, \pairirrep)$ satisfies our first criterion (C1) if \pairirrep{} is an irrep and
\(T\in \range(\pairirrep{})\).
\increase

The criterion C1 is motivated by the number of parameters in the AGF and SP. 
We can count the number of parameters using the orthogonality of characters~\cite{Tinkham92}.
In Appendix~\ref{app:proofs} we show that if a reducible representation is used,
the number of parameters is unnecessarily increased.
\increase

%Therefore to benchmark \(T\), we opt not to regard \(T\) as a member of the image of a representation consisting of the direct sum of three irreps of a cyclic group.
%The motivation of C1 is two-fold.
%An application of the orthogonality theorem for characters~\cite{Tinkham92}
%reveals that if a reducible
%representation is used, the number of parameters is unnecessarily increased
%as proven in Appendix~\ref{app:proofs}.
%Therefore, when benchmarking 
%\(T\), we opt not to regard \(T\) as part of the image of a representation consisting of the direct sum of three irreps of a cyclic group.

Our second criterion (C2)
is established so as to only require 
projector or character techniques to recover SPAM error
independence~\cite{dugas2015,claes2021}.
A pair $(\pairgroup, \pairirrep)$ satisfies C2 if it satisfies C1 and twirling
any channel by \(\Lambda_\pairirrep{}\) yields a diagonal matrix (in the
computational basis).
If a pair $(\pairgroup, \pairirrep)$ satisfies C2 we label it as appropriate.
\increase

\comentario{reb:importance-c-2}
We stress the significance of C2
in light of some alternatives~\cite{helsen22prx,chen2022}.
Whereas RB can be realised with non-diagonal twirls,
our criterion intentionally circumvents the necessity for additional statistical techniques.
This enables our method to be incorporated as a subroutine in a more comprehensive characterisation scheme.
\increase

Our next two criteria deal with experimental costs
\comentario{reb:importance-c-3and4}
and are necessary to pick the best candidate among the appropriate groups identified
with C2.
We introduce our third criterion (C3) to reduce the number of gates needed.
A pair $(\pairgroup, \pairirrep)$ satisfies C3 if it satisfies C2 and the order
of \pairgroup{} is minimal: not other appropriate pair contains a group with fewer
elements than \(\mathbb{G}\).
\increase

For our fourth criterion (C4), we consider \ac{spam} costs. 
A pair $(\pairgroup, \pairirrep)$ satisfies C4 if $(\pairgroup, \pairirrep)$
satisfies C3 and twirling by \pairirrep{} yields a matrix with a minimal number of distinct eigenvalues.
If a pair $(\pairgroup, \pairirrep)$ satisfies C4, we label it as optimal.
\increase

\begin{proposition}
  The following holds for the qutrit pair (\ac{hdg}, $\gamma$):
\begin{itemize}
    \item[P1.] $\gamma$ in Eq.~\eqref{eq:the-irrep} is an \ac{irrep} and $T\in\range(\gamma)$.
    \item[P2.] twirling a channel with respect to $\Lambda_\gamma$ yields a diagonal
      channel in the computational basis;
    \item[P3.] the \ac{hdg} is the group with the smallest order satisfying with
      an \ac{irrep} satisfying P1 and P2;
    \item[P4.] \ac{hdg} \ac{agf} has the smallest number of parameters and satisfies P3.
\end{itemize}
\end{proposition}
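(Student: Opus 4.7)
The plan is to verify P1 and P2 by direct calculation using the explicit irrep $\gamma$ defined in Eq.~\eqref{eq:the-irrep}, and then establish the minimality claims P3 and P4 via a structural analysis of candidate pairs $(\mathbb{G},\sigma)$ satisfying C1 and C2.

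For P1, the inclusion $T\in\range(\gamma)$ is immediate since $T=\gamma(a^0,(1,0))$. For irreducibility of $\gamma$, I would apply Schur's lemma to the generating set $\{T,X\}\subseteq\range(\gamma)$: because the diagonal entries $\{1,\omega_9^8,\omega_9\}$ of $T$ are distinct, any matrix commuting with $T$ is diagonal; the further commutation with the cyclic shift $X$ forces a scalar. Hence $\gamma$ is an irrep. For P2, I would use the decomposition $\gamma\otimes\gamma^{*}=\Gamma_{\mathbb{I}}\oplus\Gamma_0\oplus\Gamma_0^{*}\oplus\Gamma_+\oplus\Gamma_+^{*}$ of Eq.~\eqref{eq:decomposition-irrep-hdg}. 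The twirl is diagonal in the computational basis precisely when each irrep in this decomposition appears with multiplicity one and each projector $\Pi_\Gamma$ is diagonal. Multiplicity-freeness can be checked via the character inner product $\|\chi_{\gamma\otimes\gamma^{*}}\|^2=5$; diagonality of the projectors follows from the explicit formula $\Pi_\Gamma=(\dim\Gamma/|\mathrm{HDG}|)\sum_g\overline{\chi_\Gamma(g)}\Lambda_\gamma(g)$, combined with the observation that averaging $\gamma(g)\otimes\gamma(g)^{*}$ over the $X$-orbit of any diagonal element gives a diagonal matrix.

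For P3, the minimality argument proceeds in two stages. First, since $T$ has order nine, $\sigma(\mathbb{G})$ must contain $\langle T\rangle\cong C_9$. Second, no group consisting entirely of diagonal unitaries can produce a twirl that diagonalises an arbitrary Choi matrix $\Lambda_\mathcal{E}$, because conjugation by diagonal unitaries only rescales off-diagonal entries and cannot annihilate them by averaging. Thus $\sigma(\mathbb{G})$ must contain a non-diagonal unitary that permutes the eigenspaces of $T$; the minimal such unitary, up to a phase, is the cyclic shift $X$. Closing under multiplication, $XTX^{-1}=T^3(T')^4$ must lie in $\sigma(\mathbb{G})$, so $T'\in\sigma(\mathbb{G})$ and $\sigma(\mathbb{G})\supseteq\langle T,T',X\rangle$, the semidirect product $C_3\ltimes C_9^{\times 2}$ of order $243$. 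P4 then follows by parameter counting: the five irreps of Eq.~\eqref{eq:decomposition-irrep-hdg} combine into the trivial irrep plus two complex-conjugate pairs, giving two complex parameters $\lambda_0,\lambda_+$, which is minimal because $\dim(\gamma\otimes\gamma^{*})=9$ and only a one-dimensional real subspace can sit in the trivial irrep.

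The main obstacle is making P3 airtight: rigorously excluding exotic subgroups of $\mathrm{U}(3)$ that contain $T$ but are smaller than $\mathrm{HDG}$. I expect the cleanest route exploits the fact that the diagonal subgroup of $\sigma(\mathbb{G})$ must be normal, implicit in the diagonal-twirl condition, and uses the explicit conjugation $\phi$ of Appendix~\ref{app:computation-hdg-elemes} to pin down the diagonal subgroup as $C_9^{\times 2}$, after which the semidirect-product structure forces the total order $243$.
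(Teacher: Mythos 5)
Your proposal is correct in outline and diverges from the paper's proof at essentially every step; the divergence is to your credit at P1 and P2 but leaves a hole at P3. For P1 the paper uses the character criterion: it verifies that $\sum_{g}\abs{\chi_\gamma(g)}^2$ equals the order of the HDG and concludes irreducibility. Your commutant argument (the distinct diagonal entries $1,\omega_9^8,\omega_9$ of $T$ force any commuting matrix to be diagonal, and commuting further with the cyclic shift $X$ forces a scalar) reaches the same conclusion via Schur's lemma and is checkable by hand, whereas the character sum requires tabulating traces over all $243$ elements. For P2 the two routes coincide in essence --- both rest on the multiplicity-free decomposition of Eq.~\eqref{eq:decomposition-irrep-hdg} followed by Schur's lemma --- but you correctly isolate a step the paper leaves implicit: multiplicity-freeness only gives the twirl as a sum $\sum_\Gamma \lambda_\Gamma \Pi_\Gamma$ of scalars times isotypic projectors, and one must still verify that these projectors have the claimed form in the computational product basis; your proposed check via the projector formula and averaging over the diagonal subgroup generated by $T$ and $T'$ is the right way to close that. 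For P4 both arguments reduce to the same count of the five irreps as one trivial plus two conjugate pairs.

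The genuine gap is P3. The paper proves minimality by direct enumeration of all groups of order less than $243$ admitting an irrep with $T$ in its range and a diagonal twirl; it offers no structural argument, but the enumeration (taken on trust) is exhaustive. You replace it with a closure argument: $\range(\sigma)\supseteq\langle T\rangle\cong C_9$, the image cannot be wholly diagonal, and once a cyclic shift is present the normal closure of $\langle T\rangle$ already has order $81$, forcing $\abs{\range(\sigma)}\geq 243$. The unproved step, which you yourself flag, is ``the minimal such unitary, up to a phase, is the cyclic shift $X$.'' To make this airtight you would need to show (i) that the diagonal-twirl condition forces $\range(\sigma)$ to normalise its diagonal subgroup, i.e.\ to be monomial with respect to the computational basis; (ii) that irreducibility forces the permutation parts of its elements to include a $3$-cycle rather than only transpositions; and (iii) that conjugating $\langle T\rangle$ by \emph{any} such monomial $3$-cycle --- not just $X$ itself --- generates a diagonal subgroup of order at least $81$. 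None of these is automatic: a priori an irreducible finite subgroup of $\mathrm{U}(3)$ containing $T$ need not be monomial, and your appeal to the specific conjugation relation $\phi$ presupposes the answer. Until (i)--(iii) are supplied, your P3 is a plausibility argument rather than a proof, and P4, which you correctly derive from P3 plus the irrep count, inherits the same conditional status.
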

\increase

P1 is established through the examination of character properties~\cite{Tinkham92}.
As the sum of the squared moduli of traces of each member of $\range(\gamma)$
equals the order of the \ac{hdg},
$\gamma$ is indeed verified to be an \ac{irrep}~\cite{Tinkham92}.
P2 is confirmed by employing HDG character table to ascertain that 
the irreps of \(\Lambda_\gamma\) decompose  \(\mathscr{H}^{\otimes 2}\)
 as
\begin{equation} \label{eq:decomposition-irrep-hdg}
  \mathscr{H}^{\otimes 2}
=
\Gamma_{\mathbb I}\oplus 
\Gamma_{0}\oplus 
\Gamma_{0}^*\oplus 
\Gamma_{+}\oplus 
\Gamma_{+}^*:
\end{equation} 
 the trivial \ac{irrep} denoted $\Gamma_{\mathbb I}$;
 two conjugated one-dimensional \acp{irrep},~$\Gamma_0$ and $\Gamma_{0}^*$;
 and two conjugated three-dimensional \acp{irrep}, $\Gamma_+$ and $\Gamma_{+}^*$.
Consequently, Schur's lemma (as explicitly analysed in the \ac{sm} of Ref.~\cite{gambetta2012}) ensures the twirl is diagonal.
\increase

We finish the study of Proposition~1 by proving P3 and P4.
P3 is proven by direct enumeration of each group with order smaller than the HDG.
Then since the qutrit \ac{hdg} is the sole group that fulfils P3, P4 follows directly.
As P1 implies C1, P2 implies C2, and P3 and P4 imply C3 and C4,
respectively, Proposition~1 shows the pair (HDG, $\gamma$) satisfies our four criteria 
and is thus a generalisation of $D_8$.
In what follows, we use (\ac{hdg}, $\gamma$) to generalise the \acl{db} scheme.
\increase
\begin{lemma}[\cite{serre1977}]\label{lemma}
Let \(\mathbb{G}\) be a finite group and $\gamma'$ be an irrep  of
\(\mathbb{G}\). Let us define the representation $\Lambda_{\gamma'}\colon \mathbb{G}\to
\mathscr{M}^2\coloneqq g \mapsto 
\gamma'(g)\otimes \gamma'(g)^{*}$. 
Then the trivial irrep (\(g\mapsto 1\in \mathbb{C}\)) appears in the decomposition of $\Lambda_{\gamma'}$.
\end{lemma}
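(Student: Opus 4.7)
The plan is to prove Lemma~1 via the standard character-orthogonality argument. The main point is that, for any group representation $\gamma'$, the multiplicity with which the trivial irrep appears in $\Lambda_{\gamma'}$ is the inner product of the character of $\Lambda_{\gamma'}$ with the constant character $\chi_{\text{triv}} \equiv 1$, and for an irreducible $\gamma'$ this inner product evaluates to exactly one.

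First I would compute the character of the composed representation using multiplicativity of trace on tensor products. For every $g\in \mathbb{G}$,
\[
\chi_{\Lambda_{\gamma'}}(g) = \operatorname{tr}\!\bigl(\gamma'(g)\otimes \gamma'(g)^*\bigr) = \operatorname{tr}\bigl(\gamma'(g)\bigr)\,\overline{\operatorname{tr}\bigl(\gamma'(g)\bigr)} = \bigl|\chi_{\gamma'}(g)\bigr|^2.
\]
Then I would express the multiplicity of the trivial irrep inside $\Lambda_{\gamma'}$ via the Frobenius inner product of characters,
\[
\bigl\langle \chi_{\text{triv}},\, \chi_{\Lambda_{\gamma'}} \bigr\rangle = \frac{1}{|\mathbb{G}|}\sum_{g\in\mathbb{G}} \bigl|\chi_{\gamma'}(g)\bigr|^2 = \bigl\langle \chi_{\gamma'},\, \chi_{\gamma'} \bigr\rangle.
\]

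Now the key step invokes the first orthogonality relation for characters of finite groups: since $\gamma'$ is assumed irreducible, $\langle \chi_{\gamma'}, \chi_{\gamma'} \rangle = 1$. Therefore the trivial irrep appears with multiplicity exactly one in $\Lambda_{\gamma'}$, which in particular proves that it appears. An equivalent conceptual check that I would mention for intuition is via Schur's lemma: the trivial-isotypic component of $\Lambda_{\gamma'}$ acting on $V\otimes V^*\cong \operatorname{End}(V)$ corresponds to the space of operators commuting with all $\gamma'(g)$, and irreducibility forces this commutant to be one-dimensional, spanned by the identity.

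There is no serious obstacle here: both routes are elementary, and the only care required is to keep track of the conjugation so that the tensor factor $\gamma'(g)^*$ yields $\overline{\chi_{\gamma'}(g)}$ rather than $\chi_{\gamma'}(g)$ itself, so that the summand becomes the modulus squared and orthogonality applies directly.
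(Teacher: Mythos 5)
Your proof is correct and follows essentially the same route as the paper: both compute $\chi_{\Lambda_{\gamma'}}(g)=|\chi_{\gamma'}(g)|^2$ and take the inner product with the trivial character. The only difference is in the last step, where you invoke irreducibility to get $\langle\chi_{\gamma'},\chi_{\gamma'}\rangle=1$ (multiplicity exactly one), whereas the paper merely observes the inner product is positive since the $g=e$ term contributes $d^2$ --- your version is a slight strengthening that actually uses the hypothesis that $\gamma'$ is irreducible.
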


\begin{proof}
Let $\chi(g)$ be the character of the irrep generated by matrices $g\in\langle X,T\rangle$.
Then the character of $\Lambda_{\gamma'}(g)$ is
$\chi_{\Lambda_{\gamma'}}(g) = \chi(g)\chi(g)^* = |\chi(g)|^2$. 
We compute the inner product between $\chi$ and the character of the trivial
representation $\forall g$, $\chi_{\mathbb{I}}(g) = 1$:
\begin{equation*}
  \langle \chi_{\mathbb{I}}, \chi_{\Lambda_{\gamma'}}\rangle  = 
    \frac{1}{|\mathbb{G}|}\sum_g |\chi(g)|^2. 
\end{equation*}
Because $\chi_{\Lambda_{\gamma'}}(1) = d^2$, $\langle \chi_{\mathbb{I}}, \chi_{\Lambda_{\gamma'}}\rangle > 0$.
Therefore, the trivial irrep appears at least once in the decomposition of
$\Lambda_{\gamma'}$~\cite{serre1977}.
\end{proof}

\begin{theorem}
If $\Lambda_{\gamma'}$ is reducible then $\Lambda_{\gamma'}$ does not necessarily produces a diagonal twirl.
\end{theorem}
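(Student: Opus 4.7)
The plan is to prove the statement by exhibiting a concrete finite group $\mathbb{G}$ with an irreducible representation $\gamma'$ for which $\Lambda_{\gamma'}$ is reducible yet the twirl fails to take the diagonal form of Eq.~\eqref{eq:matrix_plrep_qutrit}---that is, a sum of distinct isotypic projectors with scalar coefficients.

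First, I would invoke Schur's lemma to characterise the structure of the twirl: $\mathcal{T}_{\mathcal{E}}$ always lies in the commutant of $\{\Lambda_{\gamma'}(g)\colon g\in\mathbb{G}\}$. Writing the isotypic decomposition $\Lambda_{\gamma'}\cong\bigoplus_i m_i\rho_i$ with distinct irreps $\rho_i$ and multiplicities $m_i$, this commutant is isomorphic to $\bigoplus_i M_{m_i}(\mathbb{C})$. It is commutative---so the twirl reduces to $\sum_\Gamma\lambda_\Gamma\Pi_\Gamma$ with scalar $\lambda_\Gamma$---precisely when $m_i=1$ for every $i$. Any multiplicity exceeding one contributes a non-commutative $M_{m_i}(\mathbb{C})$ block, allowing a generic twirl to acquire matrix-valued coefficients that mix distinct copies of $\rho_i$, so the form of Eq.~\eqref{eq:matrix_plrep_qutrit} no longer captures~$\mathcal{T}_{\mathcal{E}}$.

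Next, I would present a concrete counterexample with $\mathbb{G}=A_4$ and $\gamma'$ its unique three-dimensional irrep. A short character-theoretic computation using $\chi_{\Lambda_{\gamma'}}(g)=|\chi_{\gamma'}(g)|^2$ together with the $A_4$ character table gives
\begin{equation*}
\Lambda_{\gamma'}\;\cong\;\mathbf{1}\,\oplus\,\chi_1\,\oplus\,\chi_2\,\oplus\,2\gamma',
\end{equation*}
where $\chi_1,\chi_2$ denote the two non-trivial one-dimensional irreps of $A_4$. The appearance of $\gamma'$ with multiplicity two furnishes an $M_2(\mathbb{C})$ block in the commutant, so a suitably chosen channel $\mathcal{E}$ has a twirl whose restriction to the $\gamma'$-isotypic component is a non-scalar $2\times 2$ matrix; this contradicts the form of Eq.~\eqref{eq:matrix_plrep_qutrit}.

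The main subtlety is interpretational rather than technical: one must first fix the sense in which the twirl is ``diagonal,'' namely that its irrep decomposition is multiplicity-free (equivalently, the commutant is abelian). Once this is pinned down, the remainder of the proof reduces to Lemma~\ref{lemma} combined with the elementary $A_4$ character computation, and no further obstacle arises.
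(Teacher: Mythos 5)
Your commutant formulation is correct and is in fact a cleaner statement of the mechanism than the paper gives: the twirl is the projection onto the commutant of $\range(\Lambda_{\gamma'})$, which is $\bigoplus_i M_{m_i}(\mathbb{C})$ and hence abelian precisely when the decomposition is multiplicity-free. Your $A_4$ computation is also right: $|\chi_{\gamma'}|^2=(9,1,0,0)$ on the four classes does give $\mathbf{1}\oplus\chi_1\oplus\chi_2\oplus 2\gamma'$. However, you have proved a different implication from the one the paper's proof establishes, and the difference matters for how the theorem is used. The paper's proof takes the hypothesis to be that the \emph{three-dimensional} representation $\gamma'$ is reducible (the proof writes $\Lambda_{\gamma'}=\alpha\oplus\beta$, but the next line, $(\alpha\oplus\beta)\otimes(\alpha\oplus\beta)^*$, shows that $\gamma'=\alpha\oplus\beta$ is meant); it then applies Lemma~\ref{lemma} to $\alpha\otimes\alpha^*$ and $\beta\otimes\beta^*$ separately to conclude that the \emph{trivial} irrep appears with multiplicity at least two. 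That is exactly the implication needed to justify criterion C1, namely that using a reducible $\sigma$ forces extra parameters. Your counterexample instead uses an \emph{irreducible} $\gamma'$ whose $\Lambda_{\gamma'}$ fails to be multiplicity-free. This does prove the theorem as literally worded (since $\Lambda_{\gamma'}$ is always reducible, the statement reduces to an existence claim, which $A_4$ settles), and it is a genuinely useful observation—it shows irreducibility of $\gamma'$ is not \emph{sufficient} for a diagonal twirl, which is precisely why the paper needs C2 as a separate criterion—but it does not establish that reducibility of the underlying representation is what breaks diagonality, which is the content the paper actually extracts here.

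One shared loose end: both your argument and the paper's stop at ``some isotypic block has multiplicity at least two'' and treat this as synonymous with ``the twirl is not necessarily of the form of Eq.~\eqref{eq:matrix_plrep_qutrit}.'' Strictly, one should still exhibit a channel $\mathcal{E}$ whose image under the conditional expectation onto the commutant is non-scalar on that block; this is easy (the channels span enough of $\mathscr{M}^{\otimes 2}$), but neither proof says it.
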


\begin{proof}
Proving this theorem is equivalent to showing that there is an irrep with multiplicity greater than one in the decomposition of the representation.
Without loss of generality, assume $\Lambda_{\gamma'}$ decomposes into two irreps $\alpha$ and $\beta$ as $\Lambda_{\gamma'} = \alpha\oplus\beta$. Then $\Lambda_{\gamma'} = (\alpha\oplus\beta)\otimes(\alpha\oplus\beta)^* = \alpha\otimes\alpha^* \oplus \alpha\otimes\beta^* \oplus \beta\otimes\alpha^* \oplus\beta\otimes\beta^*$. 
By Lemma~\ref{lemma},
we know that each of the representations, \(\alpha\otimes\alpha^*\) and \(\beta\otimes\beta^*\), carries the trivial irrep at least once.
Thus, $\Lambda_{\gamma'}$ has an irrep with multiplicity at least 2. 
Therefore, $\Lambda_{\gamma'}$ does not necessarily produces a diagonal twirl.
\end{proof}

% \end{bibliography}
\begin{acronym}
    \acro{rb}[RB]{randomised benchmarking}
    \acro{db}[DB]{dihedral benchmarking}
    \acro{ib}[IB]{interleaved benchmarking}
    \acro{sp}[SP]{survival probability}
    \acro{hdg}[HDG]{hyperdihedral group}
    \acro{hdb}[HDB]{hyperdihedral benchmarking}
    \acro{hw}[HW]{Heisenberg-Weyl}
    \acro{gs}[gate set]{set of gates}
    \acro{spam}[SPAM]{state preparation and measurement}
    \acro{agf}[AGF]{average gate fidelity}
    \acro{bi}[BI]{Berkeley implementation}
    \acro{pl}[PL]{Pauli-Liouville}
    \acro{irrep}[irrep]{irreducible representation}
    \acro{ft}[FT]{Fourier transform}
    \acro{sef}[SEF]{SPAM error-free}
    \acro{sm}[SM]{Supplementary Material}
    \acro{cptp}[CPTP]{Completely positive trace preserving}
\end{acronym}
\end{document}